\documentclass[a4paper,UKenglish,cleveref, autoref, thm-restate]{lipics-v2021}
\nolinenumbers
\usepackage{stmaryrd}
\usepackage{tikz-cd}
\usepackage{amsmath, amssymb}
\usepackage{mathtools}
\usepackage{bbold}
\usepackage[scr=euler]{mathalfa}
\usepackage{BOONDOX-calo}

\usepackage{tikz}
   \usetikzlibrary{shapes.geometric, intersections}

\newcommand{\IMP}{\; \Rightarrow \;}

\newcommand{\id}{\mathsf{id}}
\newcommand{\ie}{\textit{i.e.}~}

\newcommand{\vphi}{\varphi}

\newcommand{\MM}{\mathcal{M}}

\newcommand{\es}{\varnothing}

\newcommand{\vv}{\vec{v}}

\newcommand{\Nat}{\mathbb{N}}

\newcommand{\HH}{\mathcal{H}}

\newcommand{\ZZ}{\mathbb{Z}}

\newcommand{\Pauli}{\mathbf{Pauli}}

\newcommand{\Zero}{\mathbf{0}}

\newcommand{\Zen}{\mathscr{Z}}

\newcommand{\rwr}{\underset{R}{\longleftrightarrow}}

\newcommand{\RCG}{R_{\CG}}
\newcommand{\CG}{\mathsf{G}}

\newcommand{\NN}{\mathscr{N}}
\newcommand{\rew}{\rightarrow}
\newcommand{\rCG}{\rew_{\CG}}

\newcommand{\eqr}{\doteq}
\newcommand{\equivR}{\overset{*}{\underset{R}{\longleftrightarrow}}}

\newcommand{\Inv}{\mathscr{I}}
\newcommand{\lcomm}{\llbracket}
\newcommand{\rcomm}{\rrbracket}
\newcommand{\BE}{\mathsf{BE}}
\newcommand{\bd}{\partial}
\newcommand{\mlb}{\{ \! \! |}
\newcommand{\mrb}{| \! \! \}}

\newcommand{\ket}[1]{| #1 \rangle}
\newcommand{\UG}{\mathsf{U}}
\newcommand{\Complex}{\mathbb{C}}
\newcommand{\CM}{\mathsf{C}}
\newcommand{\kv}{\vec{k}}
\newcommand{\lv}{\vec{l}}

\newcommand{\mutr}{\check{\mu}}
\newcommand{\cm}{\mathfrak{so}}
\newcommand{\val}{\mathcal{v}}
\newcommand{\XG}{\mathcal{X}}
\newcommand{\Pnd}{\mathbb{P}_{n,d}}
\newcommand{\dvs}{\, \vdots \,}
\newcommand{\ndvs}{\, \not{\vdots} \,}

\DeclareMathOperator{\Za}{\mathbb{Z}}

\title{Commutation Groups and State-Independent Contextuality} 

\author{Samson Abramsky}{Department of Computer Science, University College London, 66-72 Gower St., London WC1E 6EA, U.K.}{s.abramsky@ucl.ac.uk}{0000-0003-3921-6637}{EPSRC EP/V040944/1 Resources in Computation, UKRI 10050493 FoQaCiA}

\author{\c{S}erban-Ion Cercelescu}{Department of Computer Science, University of Oxford, Parks Road, Oxford OX1 3QD, U.K.}{serban-ion.cercelescu@exeter.ox.ac.uk}{}{}

\author{Carmen-Maria Constantin}{Department of Computer Science, University College London, 66-72 Gower St., London WC1E 6EA, U.K.}{c.constantin@ucl.ac.uk}{0000-0003-4508-9312}{UKRI 10050493 FoQaCiA}

\authorrunning{S. Abramsky, S. Cercelescu and C. Constantin} 

\Copyright{Samson Abramsky, \c{S}erban-Ion Cercelescu, Carmen-Maria Constantin} 

\ccsdesc[500]{Theory of computation}

\keywords{Contextuality, state-independence, quantum mechanics, Pauli group, group presentations, unitary representations} 

\category{} 

\relatedversion{} 

\EventEditors{}
\EventNoEds{0}
\EventLongTitle{The 9th International Conference on Formal Structures for Computation and Deduction (FSCD 2024)}
\EventShortTitle{FSCD 2024}
\EventAcronym{FSCD}
\EventYear{2024}
\EventDate{July 8--13, 2024}
\EventLocation{Tallinn, Estonia}
\EventLogo{}
\SeriesVolume{}
\ArticleNo{}

\begin{document}

\maketitle

\begin{abstract}
We introduce an algebraic structure for studying state-independent contextuality arguments, a key form of quantum non-classicality exemplified by the well-known Peres-Mermin magic square, and used as a source of quantum advantage.
We introduce \emph{commutation groups} presented by generators and relations, and analyse them in terms of a string rewriting system. There is also a linear algebraic construction, a directed version of the Heisenberg group. We introduce \emph{contextual words} as a general form of contextuality witness. We characterise when contextual words can arise in commutation groups, and explicitly construct non-contextual value assignments in other cases. We give unitary representations of commutation groups as subgroups of generalized Pauli $n$-groups.
\end{abstract}

\section{Introduction}
Contextuality is a key form of non-classicality in quantum mechanics, and is the source of quantum advantage in a range of settings, including measurement-based quantum computation \cite{raussendorf2013contextuality} and shallow circuits \cite{bravyi2018quantum,bravyi2020quantum}. 
In classical physics, observable quantities have well-defined values independently of which measurements are performed.
This is contradicted by the predictions of quantum mechanics \cite{kochen1975problem}, as verified in numerous experiments \cite{bartosik2009experimental,hasegawa2006quantum}.
These say that values can only be assigned \emph{locally}, in measurement contexts, \ie with respect to sets of measurements which can be performed together, providing observational ``windows'' of classical information on the quantum system. These windows may overlap, and will agree on their overlaps (\emph{local consistency}), but it is not possible, on pain of logical contradiction, to glue all these pieces of information together (\emph{global inconsistency}).

The strongest form of this phenomenon is \emph{state-independent contextuality}, where the structure of the observables dictates that contextuality arises for any state. The most famous example of this phenomenon is the Peres-Mermin magic square \cite{mermin1990simple}, which is constructed from the 2-qubit Pauli group\footnote{We recall the definition of the Pauli group in the Appendix.}:
\begin{align*}
\begin{array}{ccccc}
XI & \text{---} & IX & \text{---} & XX \\
| & & | & & | \\
IZ & \text{---}  & ZI & \text{---}  &  ZZ \\
| & & | & & | \\
XZ & \text{---}  &  ZX & \text{---}  & YY
\end{array}
\end{align*}
Here $XI$ denotes the 2-qubit operator $\sigma_x \otimes I$, and similarly for the other entries. One can now calculate that the operators in each row and column pairwise commute, and hence form a valid measurement context. Moreover, the product of each of the rows, and of the first two columns, is $II$; while the product of the third column is $-II$.

We shall now see how to recognize  contextuality in this example. The key point is that this can be done \textit{a priori}, independently of any  observational data.
We ask if there is an assignment of outcomes
$\val : \XG \to  \ZZ_{2}$ , where $\XG$ is the set of operators in the table, subject to the conditions that
\begin{enumerate}
\item if  $p$ and $q$ commute, then $\val(pq) = \val(p) + \val(q)$. 
\item $\val(II) = 0$ and $\val(-II) = 1$.
\end{enumerate}
Such an assignment is called a \emph{non-contextual value assignment}. If no such assignment exists, this yields an example of \emph{contextuality}.
We call this \emph{state-independent}, since it arises purely at the level of the operators in the table, independently of any state.

Note that we only require the homomorphism condition~(1) for \emph{commuting} operators, which correspond to observables that can be performed together, in a common context. This is the key idea introduced by Kochen and Specker in their seminal work on contextuality \cite{kochen1975problem}.

Now assume for contradiction that such an assignment exists.
We obtain the following set of  equations over $\ZZ_2$ from the above table, one for  each row and each column:
\begin{align}
\label{PMeqns}
\begin{array}{ccccccc}
a + b + c & = & 0 & \quad & a + d + g & = & 0  \\
d + e + f & = & 0 & \quad & b + e + h & = & 0 \\
g + h + i & = & 0 & \quad & c + f + i & = & 1 \\
\end{array}
\end{align}
Here $a$ is a variable corresponding to $\val(XI)$, etc.

Since each variable appears twice in the left hand sides, summing over them yields $0$, while summing over the right hand sides yields $1$. This yields the required contradiction.

The justification for assuming the partial homomorphism condition comes from the quantum case, where if $A$ and $B$ are commuting observables and $\psi$ is a common eigenvector of $A$ and $B$, with eigenvalue $v$ for $A$ and $w$ for $B$, then $\psi$ is an eigenvector for $AB$ with eigenvalue $vw$.
Also, $II$ has the unique eigenvalue $+1$, and $-II$ the unique eigenvalue $-1$.\footnote{Note that $\{ +1, -1 \}$ under multiplication is an isomorphic representation of $\ZZ_{2}$, with $0$ corresponding to $+1$ and $1$ to $-1$ under the mapping $i \mapsto (-1)^{i}$.}

We now wish to abstract from the specifics of the Pauli group, and understand the general structure which makes such arguments possible. This leads us to introduce the notion of \emph{commutation group}, to which we now turn.

\section{Commutation groups}
\label{cgsec}
The idea behind commutation groups is that they are built freely from prescribed commutation relations on a set of generators. Commutation relations play a fundamental role in quantum mechanics, the canonical example being the commutation relation between position and momentum (see e.g.~\cite{jordan1986quantum}):
$[p, q]  =  i \hbar \mathbb{1}$.
We can think of a commutation relation as saying that two elements commute \emph{up to a prescribed scalar}. For this to make sense in a group theoretic context, we need an action of a suitable (classical, hence abelian) group of scalars or ``phases'' on the group we are constructing. We are interested here in finite group constructions, so we shall work over the finite cyclic groups $\ZZ_d$, $d \geq 2$.

Given a finite set $\XG$ of generators, we define a \emph{commutator matrix} to be a map $\mu : \XG^{2} \to \ZZ_d$ which is skew-symmetric, meaning that $\mu(x,y) = -\mu(y,x)$ for all $x, y \in X$. We also assume that $\mu(x,x) = 0$ for all $x \in \XG$.\footnote{Note that if $d$ is even, this does not follow automatically from skew-symmetry.}

We shall describe the construction of commutation groups from  commutator matrices in two ways: by generators and relations, and by a linear algebraic construction. Both are useful, and convey different intuitions.

\subsection{Commutation groups by generators and relations}
We briefly review the standard notion of \emph{presentation of a monoid by generators and relations} $\langle \XG \, | \, R \rangle$.
We form the free monoid $\XG^*$, and quotient it  by the congruence induced from the relations $R \subseteq \XG^* \times \XG^*$. Explicitly,
we define a symmetric relation ${\rwr} \, \subseteq \, \XG^* \times \XG^*$ by $s \rwr t$ iff there is $(u,v) \in R \cup R^{-1}$ such that, for some $w_1,w_2 \in \XG^*$,  $s = w_1 u w_2$, and $t = w_1 v w_2$.  We then take the reflexive transitive closure $\equivR$. This is a monoid congruence, and the quotient $M = \XG^*/{\equivR}$ is the presented monoid.

\textbf{Notation} We write relations as $u \eqr v$. We write the empty sequence, which forms the identity element of the free monoid, as $1$.

Given a commutator matrix $\mu : \XG^{2} \to \ZZ_d$, we define a set of relations $\RCG$ over the generators $\XG \sqcup \ZZ_d$ (using $\sqcup$ for disjoint union),  where we write $J_k$ for the generator corresponding to $k \in \ZZ_d$, and:
\begin{itemize}
\item We have relations $R_{\mu} := \{ xy \eqr J_{\mu(x,y)} yx \mid x, y \in \XG \}$.
\item We have $R_{J} := \{ J_0 \eqr 1 \} \cup \{ J_{k}J_{k'} \eqr J_{k+k'} \mid k, k' \in \ZZ_d \} \cup \{ J_k x \eqr x J_k \mid x \in \XG, k \in \ZZ_d \}$.
\item We have $R_{d} := \{ x^d \eqr 1 \mid x \in \XG\}$.
\item Finally, $\RCG := R_{\mu} \cup R_{J} \cup R_{d}$.
\end{itemize}
The resulting monoid $\CG(\mu) := \langle \XG \sqcup \ZZ_d \mid \RCG \rangle$ is in fact a group, since every generator has an inverse. We call it the \emph{commutation group} generated by $\mu$.

The $J$-relations ensure that there is an isomorphic copy of $\ZZ_d$ in the centre of the group.
The key relations are the commutation relations $xy \eqr J_{\mu(x,y)} yx$. Note that these are \emph{directional}, since by skew-symmetry of $\mu$, if $\mu(x,y) = k$, then $yx \eqr J_{-k}xy$. Thus moving $x$ right past $y$ has the opposite ``cost'' to moving $x$ left past $y$.
This suggests that we can analyze $\CG(\mu)$ by a directed \emph{string rewriting system}.

To do this, we fix a linear ordering $x_1 < \cdots < x_n$ on $\XG$.\footnote{As we shall see, the choice of ordering is immaterial, leading to isomorphic results.}
Relative to this ordering, elements of $\CG(\mu)$ can be represented as ordered multisets over $\XG$ with multiplicities strictly less than $d$, together with a ``global phase'' from $\ZZ_d$.
Explicitly, we define $\NN$ to be the set of all expressions $J_{k}x_{1}^{k_1} \cdots x_{n}^{k_n}$, with $k \in  \ZZ_d$, and $0 \leq k_i < d$, $1 \leq i \leq n$.
There is an evident bijection $\NN \cong \ZZ_d \times \ZZ_{d}^{n}$. Thus $\NN$ has cardinality $d^{n+1}$.

We  now define a string rewriting system on $\XG \sqcup \ZZ_d$, obtained by orienting a subset of the relations $\RCG$, determined by the chosen linear order on $\XG$:
\begin{itemize}
\item $\rew_{\mu} \, := \, \{ xy \rew J_{\mu(x,y)} yx \mid x > y \}$.
\item $\rew_{J} \, := \, \{ J_{0} \rew 1 \} \cup \{ J_{k}J_{k'} \rew J_{k + k'} \mid k, k' \in \ZZ_d \} \cup \{ xJ_{k} \rew J_{k}x \mid x \in \XG, k \in \ZZ_d \}$.
\item $\rew_{d} \, := \, \{ x^d \rew 1 \}$.
\item $\rCG \, := \, {\rew_{\mu}} \cup {\rew_{d}} \cup {\rew_{J}}$.
\end{itemize}
This induces a relation on $(\XG \sqcup \ZZ_d)^*$ by $s \rew t$ iff for some $u \rCG v$, for some $w_1, w_2 \in (\XG \sqcup \ZZ_d)^*$, $s = w_1 u w_2$, and $t = w_1 v w_2$.

\begin{theorem}
\label{contermthm}
The rewrite system $\rCG$ is confluent and normalizing. The set of normal forms is $\NN$ (up to identification of $J_0$ and $1$).
\end{theorem}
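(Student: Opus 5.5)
We prove termination directly with a well-founded measure. Then we prove local confluence by a critical-pair analysis and conclude confluence by Newman's Lemma. Finally we read off the irreducible words.

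\textbf{Termination.} To a word $s \in (\XG \sqcup \ZZ_d)^*$ we assign the tuple
\[ \Phi(s) = (N_\XG(s),\ \mathrm{inv}(s),\ N_J(s),\ P(s)), \]
ordered lexicographically on $\Nat^4$. The components are:
\begin{itemize}
\item $N_\XG(s)$ is the number of occurrences of letters from $\XG$;
\item $\mathrm{inv}(s)$ is the number of inversions among the $\XG$-letters, i.e.\ pairs of occurrences $x$ before $y$ with $x>y$;
\item $N_J(s)$ is the number of $J$-letters;
\item $P(s)$ is the sum, over all $J$-letters, of the number of $\XG$-letters to their left.
\end{itemize}
We check that each rule decreases $\Phi$.
\begin{itemize}
\item $x^d \rew 1$ lowers $N_\XG$.
\item $xy \rew J_{\mu(x,y)} yx$ with $x>y$ keeps $N_\XG$ and lowers $\mathrm{inv}$ by one. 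It raises $N_J$, but that component comes later in the order, so this is harmless.
\item $J_0 \rew 1$ and $J_kJ_{k'} \rew J_{k+k'}$ keep the first two components and lower $N_J$.
\item $xJ_k \rew J_k x$ keeps the first three components and lowers $P$.
\end{itemize}
Since the lexicographic order on $\Nat^4$ is well-founded, $\rCG$ terminates, so every word has a normal form.

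\textbf{Normal forms.} A word is irreducible exactly when:
\begin{itemize}
\item no $J$-letter follows an $\XG$-letter, so all $J$'s come first;
\item there are no two adjacent $J$'s and no $J_0$, so there is at most one $J_k$, with $k\neq 0$;
\item the $\XG$-letters appear in nondecreasing order;
\item no letter is repeated $d$ times.
\end{itemize}
These are precisely the words $J_k x_1^{k_1}\cdots x_n^{k_n}$ with $0\le k_i<d$, where the prefix $J_0$ is replaced by the empty word. This gives the set $\NN$ up to identifying $J_0$ with $1$. Conversely, each such word contains no left-hand side of a rule.

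\textbf{Local confluence.} By Newman's Lemma it suffices to show that every critical pair, i.e.\ every overlap of two left-hand sides, is joinable.
\begin{itemize}
\item \emph{Purely $J$ overlaps.} These are $J_kJ_{k'}J_{k''}$, $J_0J_k$, $J_kJ_0$ and $xJ_kJ_{k'}$, $xJ_0$. They close by associativity and commutativity of addition in $\ZZ_d$.
\item \emph{Two commutation rules.} Take $x>y>z$ and the word $xyz$. Both reduction paths end at $J_{\mu(x,y)+\mu(x,z)+\mu(y,z)}\,zyx$, after the $J$'s have been moved left and merged.
\item \emph{Power rule against commutation.} Take $x^d y$ with $x>y$. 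One path gives $y$ directly. The other moves $y$ leftwards past all $d$ copies of $x$, giving $J_{d\mu(x,y)}\,y x^d$, which reduces to $J_0 y$ and then to $y$ because $d\mu(x,y)=0$ in $\ZZ_d$. The overlap $x y^d$ with $x>y$ is symmetric. Overlaps $x^{d+j}$ of the power rule with itself join trivially.
\end{itemize}
Throughout, we use that $\mu(x,y)=0$ produces $J_0$, which rewrites to $1$.

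I expect the power/commutation overlap $x^dy$ to be the main obstacle. One must check that the intermediate words, with $J$'s scattered among the $x$'s, really converge, by pushing the $J$'s left as they appear. One must also use that the accumulated phase is $d\mu(x,y)=0$. Once every critical pair is joinable, Newman's Lemma gives confluence, and the normal forms are unique and range over $\NN$.
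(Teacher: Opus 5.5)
Your strategy is the same as the paper's: a lexicographic termination measure, then Newman's Lemma, then a check of the critical pairs. Your measure $(N_{\XG},\mathrm{inv},N_J,P)$ is a correct variant of the paper's triple (number of $\XG$-inversions, number of $J$-inversions, length). Your explicit description of the irreducible words also spells out something the paper leaves implicit.

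The gap is in the critical-pair enumeration. You present it as exhaustive, but it omits two overlaps:
\begin{itemize}
\item the commutation rule against $J$-commutation, $xyJ_k$ with $x>y$;
\item the power rule against $J$-commutation, $x^dJ_k$.
\end{itemize}
Neither falls under any of your headings. Both appear as diagrams in the paper's proof. Conversely, your list contains $xJ_kJ_{k'}$, $x^{d+j}$ and the $J_0$-inclusions ($J_0J_k$, $J_kJ_0$, $xJ_0$), which the paper does not display. The two lists together cover every overlap of left-hand sides.

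The missing cases close routinely. With $a=\mu(x,y)$,
\[ xyJ_k \rew J_a yxJ_k \rew^{*} J_aJ_k yx \rew J_{a+k}\,yx, \qquad xyJ_k \rew xJ_k y \rew J_k xy \rew J_kJ_a yx \rew J_{k+a}\,yx, \]
which join using $a+k=k+a$ in $\ZZ_d$. For the second case,
\[ x^{d}J_k \rew J_k, \qquad x^{d-1}xJ_k \rew x^{d-1}J_kx \rew^{*} J_k x^{d} \rew J_k . \]
With these two cases added, your argument is complete.
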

\begin{proof}
Given a word $s \in (\XG \sqcup \ZZ_d)^*$, we define:
\begin{itemize}
\item An $\XG$-inversion in $s$ is $(u,v,w,x,y)$ such that $s = uxvyw$, and $x > y$.
\item A $J$-inversion in $s$ is $(u,v,w,x,k)$ such that $s = uxvJ_{k}w$.
\end{itemize}
We define a function $\vphi : (\XG \sqcup A)^* \to \Nat \times \Nat \times \Nat$ by $\vphi(s) = (n,m,l)$, where $n$ is the number of $\XG$-inversions in $s$, $m$ is the number of $J$-inversions, and $l$ is the length of $s$. 

We now observe that for each rewrite  $s \rew t$ in the above system $\rCG$, $\vphi(s) \succ \vphi(t)$ in the lexicographic ordering on $\Nat \times \Nat \times \Nat$. Indeed, the $\mu$ relations decrease the number of $\XG$-inversions, the $J$-commutation rule decreases the number of $J$-inversions while not increasing the number of $\XG$-inversions, and the remaining rules decrease length while not increasing the number of inversions. Since this ordering is well-founded, it follows that $\rCG$ is normalizing.

By Newman's Lemma, it now suffices to show that $\rCG$ is weakly confluent. This is verified straightforwardly by examining the critical pairs.

Firstly, consider $x > y > z$, $\mu(x,y) = a$, $\mu(y,z) = b$, $\mu(x,z) = c$:
\[ \begin{tikzcd}[column sep=tiny]
& uxyzv \ar[dr] & \\
uJ_{a}yxzv \ar[ur, leftarrow] \ar[dr, "*"] & & uxJ_{b}zyv \ar[dl, "*"'] \\
& uJ_{a+b+c}zyxv & \\
\end{tikzcd}
\qquad \qquad
\begin{tikzcd}[column sep=tiny]
& uxyJ_{b}v \ar[dr] & \\
uJ_{a}yxJ_{b}v \ar[ur, leftarrow] \ar[dr, "*"] & & uxJ_{b}yv \ar[dl, "*"'] \\
& uJ_{a+b}yxv & \\
\end{tikzcd}
\]
Next, two cases involving $J$-generators:
\[ \begin{tikzcd}[column sep=tiny]
& ux^{d-1}xJ_{a}v \ar[dl] \ar[dr] &  \\
uJ_{a}v \ar[dr, leftarrow] &  & ux^{d-1}J_{a}xv  \ar[dl, "*"'] \\
& uJ_{a}x^{d}v   &
\end{tikzcd}
\qquad \qquad
\begin{tikzcd}[column sep=tiny]
& uJ_{a}J_{b}J_{c}v \ar[dr] & \\
uJ_{a+b}J_{c}v \ar[ur, leftarrow] \ar[dr] & & uJ_{a}J_{b+c}v \ar[dl] \\
& uJ_{a+b+c}v & \\
\end{tikzcd}
\]
Finally:
\[ \begin{tikzcd}[column sep=tiny]
& ux^{d-1}xyv \ar[dl] \ar[dr] &  \\
uyv \ar[dr, leftarrow, "*"] &  & ux^{d-1}J_{a}yxv  \ar[dl, "*"'] \\
& uJ_{d \cdot a}yx^{d}v   &
\end{tikzcd}
\qquad \qquad
\begin{tikzcd}[column sep=tiny]
& uxyy^{d-1}v \ar[dl] \ar[dr] &  \\
uxv \ar[dr, leftarrow, "*"] &  & uJ_{a}yxy^{d-1}v  \ar[dl, "*"'] \\
& uJ_{d \cdot a}y^{d}xv   &
\end{tikzcd}
\]
Note that $d \cdot a = 0 \mod d$, justifying the final legs.

\end{proof}

By virtue of this theorem, we can define a function $\theta : (\XG \sqcup \ZZ_d)^* \to \NN$, which returns the normal form of a word. 
Note that, if $w \rew^{*} w'$, then by confluence, $\theta(w) = \theta(w')$. 

We can use this function to define an equivalence on $(\XG \sqcup \ZZ_d)^*$ by $s \simeq t$ iff $\theta(s) = \theta(t)$. This equivalence is in fact a congruence, since if $\theta(u) = s = \theta(u')$ and $\theta(v) = t = \theta(v')$, then, using confluence, $\theta(uv) = \theta(st) = \theta(u'v')$.
\begin{proposition}
\label{equivprop}
For all $s, t \in (\XG \sqcup \ZZ_d)^*$, $s \simeq t$ iff  $s \eqr t$.
\end{proposition}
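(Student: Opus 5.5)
We prove the two implications separately. Here $s \eqr t$ means $s \equivR t$ for $R = \RCG$, i.e.\ $s$ and $t$ are equal in $\CG(\mu)$.

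\emph{($\Rightarrow$).} Every rule of $\rCG$ is, up to orientation, an instance of a relation in $\RCG$:
\begin{itemize}
\item $xy \rew J_{\mu(x,y)}yx$ is literally in $R_\mu$;
\item $J_0 \rew 1$, $J_kJ_{k'}\rew J_{k+k'}$ and $x^d \rew 1$ are in $R_J \cup R_d$;
\item $xJ_k \rew J_kx$ is the inverse of a relation in $R_J$, so it lies in $R \cup R^{-1}$.
\end{itemize}
Hence one-step rewriting in context satisfies ${\rew} \subseteq {\rwr}$, and so ${\rew^*} \subseteq {\equivR}$. Now suppose $s \simeq t$. Then $s \rew^* \theta(s) = \theta(t) \mathrel{{}^*{\leftarrow}} t$. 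Since $\equivR$ is symmetric and transitive, $s \equivR t$.

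\emph{($\Leftarrow$).} Since $\equivR$ is the reflexive transitive closure of $\rwr$, and $\simeq$ is an equivalence, it suffices to show that $s \rwr t$ implies $s \simeq t$. We have already observed that $\simeq$ is a congruence, so it is enough to check $u \simeq v$ for each generating pair $(u,v) \in \RCG$; symmetry of $\simeq$ then handles $R^{-1}$. We go through the relations in turn.

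\begin{itemize}
\item For $R_J$ and $R_d$: each relation either is a rule of $\rCG$ or is the reverse of one ($J_kx \eqr xJ_k$). In either case $u$ and $v$ are joinable in one step, and confluence gives $\theta(u) = \theta(v)$.
\item For $R_\mu$ with $x > y$: the relation is itself a rule, so again $\theta(u) = \theta(v)$.
\item For $R_\mu$ with $x = y$: we have $\mu(x,x) = 0$, so $xx \rew^* J_0xx \rew xx$ and both sides normalize identically.
\item For $R_\mu$ with $x < y$ (the main case): we need $xy \simeq J_{\mu(x,y)}yx$. Rewrite the right-hand side using the rule for $y > x$: $yx \rew J_{\mu(y,x)}xy$. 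Then
\[ J_{\mu(x,y)}yx \rew J_{\mu(x,y)}J_{\mu(y,x)}xy \rew J_{\mu(x,y)+\mu(y,x)}xy = J_0xy \rew xy, \]
where the equality $\mu(x,y)+\mu(y,x) = 0$ is skew-symmetry. So both sides have the same normal form.
\end{itemize}

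Putting the two directions together, $s \simeq t$ iff $s \eqr t$.

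The only nontrivial step is this last one: it is where the orientation chosen in $\rCG$ and the skew-symmetry of $\mu$ (together with the assumption $\mu(x,x)=0$) are used. Everything else reduces to confluence (Theorem~\ref{contermthm}) and the fact that $\simeq$ is a congruence.
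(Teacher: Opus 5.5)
Your proof is correct and follows the paper's argument: the forward direction holds because every rule of $\rCG$ lies in $\RCG \cup \RCG^{-1}$, and the converse is proved by checking each defining relation modulo the congruence $\simeq$, with the $x<y$ case resolved exactly as in the paper via skew-symmetry of $\mu$. You are somewhat more careful than the paper, noting that $xJ_k \rew J_kx$ is the reverse of an $R_J$ relation and treating the $x=y$ case explicitly; one small slip there is that the chain should read $J_0xx \rew xx$, not $xx \rew^* J_0xx$.
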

\begin{proof}
The left-to-right implication follows immediately since $\rCG \, \subset \, \RCG$.
For the converse, it suffices to show that $s \simeq t$ for all relations $s \eqr t$ in $\RCG$, since $\eqr$ is the least congruence containing these relations.

Consider firstly $xy \eqr J_{k}yx$, where $k = \mu(x,y)$. There are two cases:
\begin{enumerate}
\item If $x < y$, then $xy \in \NN$, and $J_{k}yx \rew J_{k}J_{-k}xy \rew^* xy$.
\item If $x > y$, then $J_{k}yx \in \NN$, and $xy \rew J_{k}yx$.
\end{enumerate}
The other relations are verified similarly.
\end{proof}
We now define a monoid with carrier $\NN$. Note that $1$ and $J_k$, $k \in \ZZ_d$, are in $\NN$. We define the multiplication by $u \cdot v := \theta(uv)$.
\begin{proposition}
$(\NN, {\cdot}, 1)$ is a monoid.
\end{proposition}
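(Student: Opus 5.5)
The plan is to transfer the monoid structure of the free monoid $(\XG \sqcup \ZZ_d)^*$ to $\NN$ along the normal-form map $\theta$, using Theorem~\ref{contermthm} together with the congruence property of $\simeq$ noted just before Proposition~\ref{equivprop}. Three facts will be used repeatedly. First, $\theta(w) \rew^* $-reduces to nothing further when $w \in \NN$, so $\theta(w) = w$ for normal forms. Second, $w \rew^* \theta(w)$ for every word $w$. Third, since $\rew$ is closed under concatenation contexts, $uv \rew^* \theta(u)v$, and hence $\theta(uv) = \theta(\theta(u)v)$ by confluence; symmetrically $\theta(uv) = \theta(u\theta(v))$.

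\textbf{Closure and units.} By definition $u \cdot v = \theta(uv)$, which lies in $\NN$ (identifying $J_0$ with $1$, as in Theorem~\ref{contermthm}). Hence the multiplication is well defined as a map $\NN \times \NN \to \NN$. For the identity, take $u \in \NN$. Concatenation with the empty word gives $1 \cdot u = \theta(u) = u$ and $u \cdot 1 = \theta(u) = u$, because $u$ is already in normal form. If $1$ is represented instead by $J_0$, one extra rewrite $J_0 \rew 1$ reduces to the same computation.

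\textbf{Associativity.} For $u,v,w \in \NN$ we compute
\[ (u\cdot v)\cdot w = \theta(\theta(uv)\,w) = \theta(uvw) = \theta(u\,\theta(vw)) = u\cdot(v\cdot w). \]
The middle equalities use the third fact: $uvw \rew^* \theta(uv)w$ and $uvw \rew^* u\theta(vw)$, and confluence (Theorem~\ref{contermthm}) then forces all these words to have the same normal form. Equivalently, this is exactly the congruence property of $\simeq$ applied to $uv \simeq \theta(uv)$ and $w \simeq w$.

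The only step needing any care is the compatibility $\theta(uv) = \theta(\theta(u)v)$. This requires that the one-step relation $\rew$ is closed under left and right multiplication by words, which holds because it is defined by rewriting within arbitrary contexts $w_1(-)w_2$. The rest is routine. As a remark, $\theta$ is then a surjective monoid homomorphism from $(\XG\sqcup\ZZ_d)^*$ onto $\NN$, and by Proposition~\ref{equivprop} this identifies $\NN$ with $\CG(\mu)$.
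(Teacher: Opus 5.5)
Your proof is correct and follows essentially the same route as the paper: associativity via $\theta(\theta(uv)w) = \theta(uvw) = \theta(u\,\theta(vw))$, justified by confluence of $\rCG$ (Theorem~\ref{contermthm}). Your explicit checks of closure and of the unit, including the identification of $J_0$ with $1$, are details the paper leaves implicit.
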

\begin{proof}
We need to verify associativity. This follows from 
\begin{equation}
\label{thetaeq}
\theta(\theta(uv)w) = \theta(uvw) = \theta(u\theta(vw))
\end{equation}
which in turn follows from confluence.
\end{proof}
We now define a map $h : \CG(\mu) \to \NN$ by $h([w]) = \theta(w)$.
\begin{theorem}
\label{srsthm}
The map $h$ is well-defined, and is a monoid isomorphism $h : \CG(\mu) \cong \NN$.
\end{theorem}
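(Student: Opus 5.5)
The plan is to obtain everything from Proposition~\ref{equivprop} together with the associativity identity~(\ref{thetaeq}).

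\textbf{Well-definedness.} Elements of $\CG(\mu)$ are congruence classes $[w]$ of words $w \in (\XG \sqcup \ZZ_d)^*$ under $\eqr$. If $[w] = [w']$, then $w \eqr w'$. By the right-to-left direction of Proposition~\ref{equivprop}, this gives $w \simeq w'$, that is, $\theta(w) = \theta(w')$. Hence $h([w]) := \theta(w)$ does not depend on the representative.

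\textbf{Homomorphism.} For the unit, $h([1]) = \theta(1) = 1$, since the empty word is already a normal form. For products, take $u = \theta(w)$ and $v = \theta(w')$. Using~(\ref{thetaeq}), or directly the congruence property of $\simeq$ noted before Proposition~\ref{equivprop}, we compute
\[
h([w][w']) = h([ww']) = \theta(ww') = \theta(\theta(w)\theta(w')) = \theta(w)\cdot\theta(w') = h([w]) \cdot h([w']).
\]
The middle equality holds because $w \rew^* \theta(w)$ and $w' \rew^* \theta(w')$ give $ww' \rew^* \theta(w)\theta(w')$, and confluence then forces equal normal forms.

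\textbf{Injectivity.} Suppose $\theta(w) = \theta(w')$. Then $w \simeq w'$, so by the left-to-right direction of Proposition~\ref{equivprop} we get $w \eqr w'$, and therefore $[w] = [w']$.

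\textbf{Surjectivity.} Each $s \in \NN$ is a word in normal form, by Theorem~\ref{contermthm}, so $\theta(s) = s$ and $h([s]) = s$. The only subtlety is the identification of $J_0$ with $1$. This is harmless because $J_0 \rew 1$ and $J_0 \eqr 1$.

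\textbf{Main obstacle.} The only nontrivial ingredient is the right-to-left direction of Proposition~\ref{equivprop}. This is the statement that every relation in $\RCG$, including the unoriented ones such as $yx \eqr J_{-k}xy$ for $x < y$ and $J_k x \eqr x J_k$, is respected by $\theta$. Since that direction has already been established, I expect no real difficulty beyond checking that the unit and the normal forms line up as stated.
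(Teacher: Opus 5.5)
Your proposal is correct and follows the paper's own proof. You get well-definedness and injectivity from the two directions of Proposition~\ref{equivprop}, multiplicativity from $\theta(uv) = \theta(\theta(u)\theta(v))$ via confluence, and surjectivity from $\theta(w) = w$ for $w \in \NN$; your explicit checks of the unit and of the $J_0$/$1$ identification are small details the paper leaves implicit.
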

\begin{proof}
If $u \eqr v$, then by Proposition~\ref{equivprop}, $\theta(u) = \theta(v)$. Thus $h$ is well-defined. The fact that it preserves multiplication follows from $\theta(uv) = \theta(\theta(u)\theta(v))$, which follows from confluence. If $w \in \NN$, then $h([w]) = \theta(w) = w$. Thus $h$ is surjective. Finally, if $h([u]) = h([v])$, then $u \simeq v$, so by Proposition~\ref{equivprop}, $[u] = [v]$.
\end{proof}

We now come to a key property for applications to contextuality.
\begin{theorem}
The internal $\ZZ_d$-action given by the $J$-generators is faithful: if $J_k \eqr J_{k'}$ in $\CG(\mu)$, then $k = k'$.
\end{theorem}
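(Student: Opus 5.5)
The plan is to read faithfulness directly off the normal-form machinery already in place, using Proposition~\ref{equivprop} (or, equivalently, Theorem~\ref{srsthm}). Suppose $J_k \eqr J_{k'}$ in $\CG(\mu)$. By Proposition~\ref{equivprop}, this gives $J_k \simeq J_{k'}$, that is, $\theta(J_k) = \theta(J_{k'})$. So it suffices to compute the normal forms of the one-letter words $J_k$ and show that they determine $k$.

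First I would observe that each single generator $J_k$ is already in normal form, up to the convention identifying $J_0$ with $1$. Inspecting $\rCG$, the only rule whose left-hand side is a single letter is $J_0 \rew 1$. Every other rule ($\rew_\mu$, $\rew_d$, the $J$-multiplication rules, and $xJ_k \rew J_k x$) has a left-hand side of length at least $2$. Hence for $k \neq 0$ the word $J_k$ is irreducible, so $\theta(J_k) = J_k$, while $\theta(J_0) = 1$.

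Next I would use the bijection $\NN \cong \ZZ_d \times \ZZ_d^n$, sending $J_k x_1^{k_1}\cdots x_n^{k_n}$ to $(k,k_1,\ldots,k_n)$, to conclude. The normal form $\theta(J_k)$ corresponds to the tuple $(k,0,\ldots,0)$, with $1 = J_0$ corresponding to $(0,0,\ldots,0)$. Therefore $\theta(J_k)=\theta(J_{k'})$ forces $k=k'$ in $\ZZ_d$.

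I do not expect a real obstacle, since all the work was done in Theorem~\ref{contermthm}. The only point needing care is the bookkeeping around $J_0$ versus $1$: one must check that distinct elements of $\NN$ really are distinct normal forms, which is exactly the uniqueness of normal forms guaranteed by confluence. This is what rules out $J_k \rew^* 1$ for $k\neq 0$ by some roundabout route.

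Equivalently, the whole argument can be phrased as saying that the isomorphism $h$ of Theorem~\ref{srsthm} maps $[J_k]$ to $J_k \in \NN$, and that $h$ is injective on these elements because they are distinct members of $\NN$.
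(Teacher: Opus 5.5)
Your proposal is correct and is essentially the paper's argument: the paper notes that, via the isomorphism $\CG(\mu) \cong \NN$ of Theorem~\ref{srsthm}, $J_k$ and $J_{k'}$ are distinct normal forms when $k \neq k'$. Your version just spells out what the paper leaves implicit, namely that $\eqr$ implies equal normal forms by Proposition~\ref{equivprop}, and that $J_k$ is irreducible because $J_0 \rew 1$ is the only rule with a one-letter left-hand side.
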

\begin{proof}
This is immediate from the isomorphic representation given by $\NN$, since if $k \neq k'$, $J_k$ and $J_{k'}$ are distinct normal forms.
\end{proof}

The parameter $d$ plays a double role in the commutation groups, defining the order of the generators by the relations $x^d = 1$, and also the abelian ``phase group'' $\ZZ_d$ acting on the commutation group. We used this double role of $d$ in proving confluence for the rewriting system. This assumption is in fact necessary to obtain a confluent system with a faithful action, as the following example shows.
\begin{example}
We assume the relations $x^d \eqr 1$ for the generators. Consider the word $w \equiv yxy^{d-1}x^{d-1}$, and let $a = \mu(x,y)$. Then $w \eqr J_{(d-1) \cdot a} y^d x^d \eqr J_{(d-1) \cdot a}$, and also $w \eqr J_{-a} x y^{d} x^{d-1} \eqr J_{-a}  x^{d} \eqr J_{-a}$. Thus to maintain confluence and faithfulness of the action, we require $(d-1) \cdot a = -a$, and hence $d \cdot a = 0$.
\end{example}

\subsubsection{Comparison with solution groups}

In \cite{cleve2017perfect} another way of abstracting from the Peres-Mermin square and similar constructions is pursued, leading to the introduction of \emph{solution groups}.
These groups are specified by sets of equations of similar form to~(\ref{PMeqns}). The generators appearing together in an equation, and only these, are specified to commute.
These groups are shown in \cite{cleve2017perfect} to control the question of whether there is a quantum realization for these equations. Importantly, this is shown to be equivalent to the existence of quantum perfect strategies for Alice-Bob non-local games.

A remarkable result of Slofstra \cite{slofstra2020tsirelson} shows that solution groups, even over $\ZZ_2$, are extremely expressive (\ie ``wild''). Every finitely presentable group can be embedded in a solution group. It follows immediately that the word problem for solution groups, and hence the quantum realization questions, are undecidable.

By contrast, commutation groups are highly tractable. By Theorem~\ref{srsthm}, they are always finite. From the proof of termination in Theorem~\ref{contermthm}, we see that reduction to normal form, and hence the decision procedure for the word problem, is at most quadratic in the length of the word. As we shall see later, every commutation group admits a faithful unitary representation.

\subsection{Linear algebraic construction of commutation groups}
The characterization of commutation groups in Theorem~\ref{srsthm} suggests another description. We shall now use the fact that $\ZZ_d$ is not just an abelian group, but a commutative ring with unit. We can write a commutator matrix, with a chosen order on the set of generators, as an $n \times n$ matrix with entries in $\ZZ_d$. 
We write $\cm(n,\Za_d)$ for the set of all $n \times n$ commutator matrices (skew-symmetric and zero on the diagonal) over $\Za_d$. Given a commutator matrix $\mu$, we write $\mutr$ for its lower triangular part, so that $\mu = \mutr - \mutr^{\intercal}$.

An  $n \times n$ matrix $M$ over $\ZZ_d$ defines a bilinear form on the free $\ZZ_d$-module $\ZZ_d^n$, by $M(\kv,\lv) := \kv^{\top} M \lv$. Now given $\mu \in \cm(\ZZ_d, n)$, we define a group $H(\mu)$ with carrier $\ZZ_d \times \ZZ_d^n$. The group product is defined by 
\[ (k, \kv) \cdot (l, \lv) \; = \; (k + l + \mutr(\kv, \lv), \, \kv + \lv) . \]
Thus it is precisely the phase factor $\mutr(\kv, \lv)$ which makes the group non-commutative.

The associativity of the product follows from bilinearity. The unit is $(0,0)$. The inverse of $(k, \kv)$ is $(-k - \mutr(\kv, -\kv), -\kv)$.
\begin{proposition}
\label{GHprop}
For any $\mu \in \cm(n, \ZZ_d)$, $H(\mu) \cong \CG(\mu)$.
\end{proposition}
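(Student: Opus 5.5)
The plan is to build a homomorphism $\CG(\mu) \to H(\mu)$ from the presentation, show that it is surjective, and then conclude by counting, using Theorem~\ref{srsthm}. This avoids computing the normal form $\theta(uv)$ of a product directly.

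\textbf{Step 1: define the map on generators.} Send $x_i \mapsto (0, e_i)$, where $e_i$ is the $i$-th standard basis vector of $\ZZ_d^n$, and send $J_k \mapsto (k, \vec{0})$. By the universal property of the presented monoid $\langle \XG \sqcup \ZZ_d \mid \RCG \rangle$, this extends to a monoid homomorphism $\Phi : \CG(\mu) \to H(\mu)$, provided every relation in $\RCG$ holds in $H(\mu)$. We check the three families in turn.
\begin{itemize}
\item \emph{$R_J$ relations.} These are immediate. The element $(k,\vec 0)$ multiplies with anything with zero correction term, because $\mutr(\vec 0,\lv) = \mutr(\kv,\vec 0) = 0$. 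So $(k,\vec0)$ is central, and $k \mapsto (k,\vec 0)$ is additive.
\item \emph{$R_d$ relations.} By induction, $(0,e_i)^m = \bigl(\tbinom{m}{2}\mutr(e_i,e_i),\, m e_i\bigr)$. Since $\mutr$ is zero on the diagonal, for $m=d$ this is $(0,\vec 0)$.
\item \emph{$R_\mu$ relations.} We have $(0,e_i)(0,e_j) = (\mutr_{ij}, e_i+e_j)$. We also have $(\mu_{ij},\vec0)(0,e_j)(0,e_i) = (\mu_{ij} + \mutr_{ji}, e_i+e_j)$. These agree because $\mu = \mutr - \mutr^{\intercal}$ gives $\mu_{ij} = \mutr_{ij} - \mutr_{ji}$.
\end{itemize}
So $\Phi$ is a well-defined homomorphism.

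\textbf{Step 2: surjectivity.} Compute $\Phi(J_k x_1^{k_1}\cdots x_n^{k_n})$. Multiplying out $(0,e_1)^{k_1}\cdots(0,e_n)^{k_n}$ from the left, the phase accumulated is a sum of terms $\mutr(e_i,e_j)$ with $i \le j$. Since $\mutr$ is strictly lower triangular, all these terms vanish. Hence the image is exactly $(k,\kv)$ with $\kv = (k_1,\dots,k_n)$.

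This is the step where I expect the only real care to be needed: matching the "lower triangular" convention for $\mutr$ with the chosen order $x_1<\cdots<x_n$ and the orientation of $\rew_\mu$. If the convention turned out reversed, the image would be $(k + c(\kv), \kv)$ for some phase correction $c(\kv)$ depending only on $\kv$. Surjectivity would still follow by choosing the $J$-prefix $J_{k - c(\kv)}$ to absorb the correction. So the argument is robust either way.

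\textbf{Step 3: conclude by cardinality.} By Theorem~\ref{srsthm}, $\CG(\mu) \cong \NN$, which has cardinality $d^{n+1}$. The carrier of $H(\mu)$ is $\ZZ_d \times \ZZ_d^n$, also of cardinality $d^{n+1}$. A surjection between finite sets of equal size is a bijection, so $\Phi$ is a group isomorphism.

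As a by-product, Step 2 shows that under the convention above $\Phi$ corresponds to the evident bijection $\NN \cong \ZZ_d \times \ZZ_d^n$. The group law of $H(\mu)$ then gives an explicit formula for the multiplication $u \cdot v = \theta(uv)$ on normal forms.
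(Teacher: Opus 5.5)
Your proof is correct, but it runs in the opposite direction from the paper's. The paper starts from the bijection $\NN \cong \ZZ_d\times\ZZ_d^n$ supplied by Theorem~\ref{srsthm} and checks by hand that it is multiplicative. When the concatenation of two normal forms is sorted, each of the $k_i l_j$ swaps of an $x_i$ from the left factor past an $x_j$ ($j<i$) from the right factor costs $\mu(x_i,x_j)$, for a total phase of $\kv^{\top}\mutr\lv$. This relies on confluence, so that the accumulated phase does not depend on the rewriting path.

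You instead use the universal property of the presentation. Only the finitely many defining relations need to be checked in $H(\mu)$, and multiplicativity then comes for free. Your checks of $R_J$, $R_d$ and $R_\mu$ are all right; the last one is exactly the identity $\mu=\mutr-\mutr^{\intercal}$. Your Step~2 convention also matches: the left-to-right product of sorted powers only picks up entries $\mutr_{ij}$ with $i\le j$, which vanish. Bijectivity then follows from surjectivity plus counting.

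The counting step actually needs only $|\CG(\mu)|\le d^{n+1}$, which follows from normalization alone, since every word rewrites to an element of $\NN$. So your argument gives an independent proof of the injectivity half of Theorem~\ref{srsthm}, and hence of the faithfulness of the $J$-action, without Newman's lemma or the critical-pair analysis. This is the standard ``concrete model'' method for solving a word problem. What the paper's route buys in exchange is a combinatorial explanation of the phase as an inversion count. That explanation is what motivates using $\mutr$ rather than $\mu$, and it is the bookkeeping reused in Section~\ref{odd}.
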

\begin{proof}
By Theorem~\ref{srsthm}, the carriers are in evident bijection: $J_{k}x_{1}^{k_1} \cdots x_{n}^{k_n} \, \leftrightarrow \,(k, (k_1, \ldots , k_n))$. We just have to check that the group product is preserved. The only non-imediate part of this is to check that the phase factors agree.

Suppose in $\NN$ we have normal forms with vector parts $u = x_1^{k_1}\cdots x_n^{k_n}$ and $v = x_1^{l_1}\cdots x_n^{l_n}$. To combine them into $\theta(uv)$, with vector part $x_1^{k_1 + l_1}\cdots x_n^{k_n + l_n}$, we must move $l_1$ copies of $x_1$ over $k_n$ copies of $x_n$, each with a cost of $\mu(x_n,x_1)$; and similarly for the occurrences of $x_{n-1} , \ldots , x_2$ in $u$, with total cost
$\sum_{i > 1} k_i \mu(x_i, x_1)  l_1$. A similar analysis applies to the occurrences of $x_2, \ldots , x_{n-1}$ in $v$, leading to a total cost of $\sum_{i > j} k_i \mu(x_i, x_j)  l_j$. This is exactly $\kv^{\top} \mutr \lv = \mutr(\kv,\lv)$.
\end{proof}
Note that we would get the same result if we moved the vector part of $u$ rightwards over the vector part of $v$. The choice of left/right orientation is just a convention. On the other hand, the use of $\mutr$ rather than $\mu$ is significant. As we will see in the next section, using $\mu$ would render the structure useless for our purpose of analyzing contextuality.

However, we do retrieve $\mu$ as the group-theoretic commutator in $H(\mu)$.
\begin{proposition}
\label{commutatorprop}
Given $g = (k, \kv) ,h = (l, \lv) \in H(\mu)$, their group theoretic commutator is given by $[g,h] := ghg^{-1}h^{-1} = (\mutr(\kv,\lv) - \mutr(\lv,\kv),0) = (\mu(\kv,\lv),0)$.
In terms of $\CG(\mu)$, for $u, v \in X^*$, $[u,v] \eqr J_{k-l}$, where $\theta(uv) \eqr J_k w$, $\theta(vu) \eqr J_l w$.
\end{proposition}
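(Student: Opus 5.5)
The plan is a direct computation in $H(\mu)$, using bilinearity of $\mutr$ and the identity $\mu = \mutr - \mutr^{\intercal}$, followed by a transfer to $\CG(\mu)$ through the isomorphism of Proposition~\ref{GHprop}.

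First I compute $gh$ and $hg$. By definition,
\[ gh = (k + l + \mutr(\kv,\lv), \, \kv + \lv), \qquad hg = (l + k + \mutr(\lv,\kv), \, \kv + \lv). \]
Rather than expanding $ghg^{-1}h^{-1}$ with the explicit inverse formula, I use that $[g,h] = (gh)(hg)^{-1}$. The elements $gh$ and $hg$ have the same vector part, so they differ by a central element. Elements of the form $(c,0)$ are central, since $\mutr(0,\lv) = \mutr(\kv,0) = 0$ by bilinearity. They also satisfy $(c,0)\cdot(m,\vec{v}) = (c+m,\vec{v})$. Hence
\[ gh = (\mutr(\kv,\lv) - \mutr(\lv,\kv), 0)\cdot hg, \]
and multiplying on the right by $(hg)^{-1}$ gives $[g,h] = (\mutr(\kv,\lv)-\mutr(\lv,\kv),0)$. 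Finally, $\mutr(\lv,\kv) = \lv^{\top}\mutr\kv = \kv^{\top}\mutr^{\intercal}\lv$, so the phase is $\kv^{\top}(\mutr - \mutr^{\intercal})\lv = \mu(\kv,\lv)$. The only point needing care is that $[g,h] = ghg^{-1}h^{-1}$ equals $(gh)(hg)^{-1}$; this holds because $(hg)^{-1} = g^{-1}h^{-1}$.

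For the statement about $\CG(\mu)$, I transfer along the isomorphism $h$ of Theorem~\ref{srsthm} and Proposition~\ref{GHprop}. Write $\theta(uv) = J_k w$ and $\theta(vu) = J_l w'$. Both normal forms have vector part equal to the exponent vector of $u$ plus that of $v$, taken mod $d$, since the rewrite rules preserve generator counts mod $d$. So $w = w'$. Then the same argument applies in $\CG(\mu)$: $uv \eqr J_k w$ and $vu \eqr J_l w$, and since $J$-generators are central,
\[ uv \eqr J_{k-l}\, vu, \]
so $[u,v] = uv(vu)^{-1} \eqr J_{k-l}$.

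The main, though mild, obstacle is justifying that $w$ is common to both normal forms. I would do this either via the vector-part component of the isomorphism with $H(\mu)$, or by noting that each rule in $\rCG$ preserves, for each $x \in \XG$, the number of occurrences of $x$ mod $d$.
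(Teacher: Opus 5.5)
Your argument is correct. The paper states this proposition without proof, treating it as a routine calculation in $H(\mu)$, and your computation is that calculation, done a little more economically than expanding with the explicit inverse formula: you write $[g,h]=(gh)(hg)^{-1}$, note that $gh$ and $hg$ differ by the central element $(\mutr(\kv,\lv)-\mutr(\lv,\kv),0)$, and use $\mu=\mutr-\mutr^{\intercal}$. For the $\CG(\mu)$ part, you observe that each rule of $\rCG$ preserves every generator's count mod $d$, so $\theta(uv)$ and $\theta(vu)$ share the vector part $w$; this correctly supplies the detail the paper leaves implicit.
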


As Proposition~\ref{GHprop} makes clear, commutation groups are very close to the (discrete version of) the Heisenberg or Heisenberg-Weyl groups \cite{semmes2003introduction}, and their close relatives the Pauli groups.
The novelty lies mainly in our combinatorial mode of presentation of commutation groups, which we will make use of in our analysis of contextuality arguments.
It should be noted, though, that the direct equivalent of the usual Heisenberg group construction in our setting would be to use the full commutator matrix $\mu$.\footnote{There is a notion of polarized Heisenberg group, but this is isomorphic to the usual presentation.} 
As we have mentioned, using $\mu$ would yield a non-isomorphic construction, which would not be useful for analyzing contextuality.
This perhaps suggests that we can  think of commutation groups as a  \emph{directed} version of Heisenberg groups.

\section{Contextuality arguments in commutation groups}
The commutation group has an evident short exact sequence
\begin{equation}
\label{seseq}
\begin{tikzcd}
\Zero \arrow[r] & \ZZ_d  \arrow[r, "i"] & H(\mu)  \arrow[r, "\pi_2"] & \ZZ_d^n  \arrow[r] & \Zero
\end{tikzcd}
\end{equation}
where $i(k) = (k,0)$. This says that it is a non-abelian group extension of $\ZZ_d^n$ by $\ZZ_d$. The image of $\ZZ_d$ lies in the centre of $H(\mu)$, so the extension is central.
Because of the non-commutativity of $H(\mu)$, it is easy to see that there is no left-splitting of this extension, \ie a homomorphism $l : H(\mu) \to \ZZ_d$ such that $l \circ i = \id_{\ZZ_d}$. One could say that this simple observation is essentially a form of von Neumann's much criticised No-Go theorem for hidden variables \cite{mermin2018homer}. The point of the criticism is that it is not reasonable to ask for a splitting which preserves non-commuting products.

Following Kochen-Specker \cite{kochen1975problem} and the huge literature on ensuing developments, we want to consider only assignments to \emph{observationally accessible contexts}, \ie those constructed from commuting products. A general setting for capturing this idea is provided by \emph{compatible monoids}, introduced in \cite{aasnaess2022comparing} with different terminology. 
A compatible monoid is a structure $(M, \odot, {\cdot}, 1)$, where $\odot$ is a reflexive, symmetric relation on $M$, of ``compatibility'' or ``comeasurability'', and $\cdot : \odot \to M$ is a partial binary operation with domain $\odot \subseteq M^2$, such that: 
\begin{itemize}
\item $x \odot y \IMP x\cdot y = y \cdot x$, 
\item $x \odot 1$ for all $x \in M$, and $x \cdot 1 = x$, 
\item if $x \odot y$, $x \odot z$, and $y \odot z$, then $x \odot (y \cdot z)$ and $(x \cdot y) \odot z$, and $(x \cdot y) \cdot z = x \cdot (y \cdot z)$.
\end{itemize}
Homomorphisms of compatible monoids are maps which preserve the compatibility relation, and the monoid operations when defined.

Any monoid $M$ defines a compatible monoid with the same carrier, with $x \odot y$ iff $xy = yx$ in $M$. We will be interested in the compatible submonoid of $M$ generated by a set $S \subseteq M$. This is the least set $T$ containing $S \cup \{ 1 \}$, and such that, whenever $u, v \in T$ and $u \odot v$, then $u \cdot v \in T$.
In particular, we will apply this to $\CG(\mu)$ with respect to the generators $\XG \sqcup \ZZ_d$. We will write $\CM(\mu)$ for this compatible submonoid of $\CG(\mu)$.

In terms of  $H(\mu)$, we can identify the generators as follows: $k \in \ZZ_d$ can be identified with the scalar $(k, 0)$, while the generators $\XG$ can be identified with the standard basis $E$ of the free module $\ZZ_d^n$, $x_i \, \leftrightarrow \, e_i := [\underbrace{0, \ldots , 0}_{i-1}, 1, \underbrace{0, \ldots , 0}_{n-i}]^{\top}$.

We obtain a short exact sequence for $\CM(\mu)$:
\begin{equation}
\label{seseqCM}
\begin{tikzcd}
\Zero \arrow[r] & \ZZ_d  \arrow[r, "i"] & \CM(\mu)  \arrow[r, "p"] & P  \arrow[r] & \Zero
\end{tikzcd}
\end{equation}
Here $p$ is the restriction of the second projection to $\CM(\mu)$, and $P$ is its image.

A non-contextual value assignment for the commutation group $\CG(\mu)$ is exactly a left splitting of this short exact sequence: \ie a homomorphism $l : \CM(\mu) \to \ZZ_d$ such that $l \circ i = \id_{\ZZ_d}$.
If no such left splitting exists, then we say that $\CG(\mu)$ exhibits \emph{state-independent contextuality}.

\subsection{Contextual words}
We can distill the essential features of ``parity proofs'' such as the one given for the Peres-Mermin square into a notion of \emph{contextual word}, which provides a witness for state-independent contextuality. This notion was introduced, somewhat informally, in the concrete context of Pauli groups over qubits in \cite{kirby2021variational}, but can be formulated generally for any commutation group $\CG(\mu)$. A contextual word for $\CG(\mu)$ is given by a triple $(w, \beta, k)$ such that:
\begin{itemize}
\item $w \in \XG^+$.
\item The number of occurrences of each generator $x \in \XG$ in $w$ is a multiple of $d$.
\item $\beta$ is a bracketing of $w$, witnessing that it is in $\CM(\mu)$.
\item $w \eqr J_k$, where $k \neq 0$.
\end{itemize}
Bracketings are defined inductively by
\[ \beta \in \BE \;\; ::= \;\; x \mid (\beta_1, \beta_2) . \]
We define $\bd : \BE \to \XG^+$ by $\bd(x) = x$, $\bd(\beta_1, \beta_2) = \bd(\beta_1)\bd(\beta_2)$. If $\bd(\beta) = w$, then $w$ is the word bracketed by $\beta$.
A bracketing $\beta$ provides a witness for $w = \bd(\beta) \in \CM(\mu)$ if, for every $(\beta'_1, \beta'_2)$ occurring in $\beta$, with $u = \bd(\beta'_1)$ and $v = \bd(\beta'_2)$, $uv \eqr vu$ in $\CG(\mu)$.


\begin{theorem}
\label{cwsicth}
A contextual word exists for $\CG(\mu)$ if and only if it is state-independently contextual.
\end{theorem}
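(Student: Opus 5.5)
The plan is to prove both implications directly. The forward direction evaluates a putative left splitting along the bracketing of a contextual word. For the converse, I define the splitting by reading off the phase of a bracketed representative, and show that any failure of well-definedness produces a contextual word.

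\emph{Contextual word $\Rightarrow$ contextuality.} Suppose $(w,\beta,k)$ is a contextual word and, for contradiction, that $l : \CM(\mu) \to \ZZ_d$ is a left splitting. I prove by induction on sub-bracketings $\beta'$ of $\beta$ that $[\bd(\beta')] \in \CM(\mu)$ and $l([\bd(\beta')]) = \sum_{x} |\bd(\beta')|_x \, l(x)$, where $|u|_x$ is the number of occurrences of $x$ in $u$. The base case is trivial. In the inductive step $(\beta_1,\beta_2)$, the witnessing condition gives $\bd(\beta_1)\bd(\beta_2) \eqr \bd(\beta_2)\bd(\beta_1)$, so the two factors are compatible. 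Since $l$ preserves compatible products, its values add. Applying this to $\beta$ itself, every $|w|_x$ is a multiple of $d$, so $l([w]) = 0$. On the other hand $w \eqr J_k$, so $l([w]) = l(i(k)) = k \neq 0$, which is a contradiction.

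\emph{No contextual word $\Rightarrow$ a splitting exists.} First I show by induction on the generation of $\CM(\mu)$ that every element has the form $J_k[\bd(\beta)]$ for some bracketing $\beta$ that is a witness, or is a scalar $J_k$. The $J$'s are central, so $J_k u$ and $J_{k'} v$ commute iff $u$ and $v$ commute, and their product is $J_{k+k'}uv$. Pairing the two bracketings therefore yields a witness. I then define $l(J_k[\bd(\beta)]) := k$ and $l(J_k) := k$. Once $l$ is well-defined, it clearly satisfies $l \circ i = \id$ and adds on compatible products, by the same centrality computation.

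The main obstacle is well-definedness. Suppose $J_k[\bd(\beta)] = J_{k'}[\bd(\beta')]$ with $k \neq k'$; I aim to build a contextual word. Let $\beta'^{-}$ be the mirror image of $\beta'$, with each leaf $x$ replaced by a right-nested bracketing of $x^{d-1}$. This is a valid witness: inverses of commuting elements commute, and $x$ commutes with its own powers. Moreover it represents $[\bd(\beta')]^{-1}$. Now set $\gamma := (\beta, \beta'^{-})$. This is a witness, because $[\bd(\beta)] = J_{k'-k}[\bd(\beta')]$ commutes with $[\bd(\beta')]^{-1}$. We also have $\bd(\gamma) \eqr J_{k'-k}$. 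For the occurrence counts, project along $\pi_2$ in (\ref{seseq}): the vector parts of $\bd(\beta)$ and $\bd(\beta')$ agree mod $d$, so $|\bd(\gamma)|_x \equiv |\bd(\beta)|_x - |\bd(\beta')|_x \equiv 0 \pmod d$. Hence $(\bd(\gamma),\gamma,k'-k)$ is a contextual word, which contradicts the hypothesis. The degenerate cases, where one side is a bare scalar, are handled the same way, with $\gamma = \beta$ or $\gamma = \beta'^{-}$ directly.
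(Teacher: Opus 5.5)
Your proposal is correct and follows essentially the same route as the paper: the forward direction is the same evaluation of a putative splitting along the bracketing. Your splitting $l(J_k[\bd(\beta)]) = k$ is exactly the paper's $l(k,\vv) = k - s(\vv)$, and in both cases well-definedness comes from multiplying one witnessed representative by the inverse of the other to get a contextual word. The only differences are minor: you realize the inverse by a mirrored bracketing with leaves $x^{d-1}$ where the paper uses the power $h^{o(h)-1}$, and you check the occurrence-count condition explicitly where the paper leaves it implicit.
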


\begin{proof}
If $(w, \beta, k)$ is a contextual word over $S$, assume for a contradiction that $l : \CM(\mu) \to \ZZ_d$ is a non-contextual value assignment, \ie a left splitting of~(\ref{seseqCM}). The bracketing $\beta$ witnesses that $w \in \CM(\mu)$. By the homomorphism property, $l(w) = \sum_i l(x_i)$, where $w = x_1 \cdots x_n$. Since each $x \in \XG$ occurs with  multiplicity $kd$ in $w$ for some $k \geq 0$, $l(w) = 0 \pmod d$. However, we also have $w \eqr J_k$, so we must have $l(w) = l(J_k) = k \neq 0$, yielding the required contradiction.

Conversely, if no contextual word exists over $S$ then we can construct a non-contextual value assignment $l$. We start by restricting our attention to a compatible submonoid of $\CM(\mu)$, which we shall denote by $\CM'(\mu)$, generated by taking the commutative closure of all the non-scalar generators of $H(\mu)$, and proving the important claim that if $g = (k_g, \vv), h = (k_h, \vv) \in C'(\mu)$ have the same vector component then they must also have the same scalar component $s(\vv):=k_g=k_h$. Note firstly that $g$ commutes with $h$: by Proposition~\ref{commutatorprop}, their commutator is $[g,h] = (\mutr(\vv,\vv) - \mutr(\vv,\vv), \vec{0}) = (0, \vec{0})$.

Next, observe that the inverse of $h$ must also belong to $\CM'(\mu)$: all the products $h^n$ will belong to $\CM'(\mu)$, since $h$ commutes with itself and, since $\CM'(\mu)$ is a compatible submonoid of a finite group, one of these powers, namely the order of $h$, must equal the identity in the original group, which is also the identity of the compatible submonoid. Let $h^{-1} = h^{o(h) - 1} = (l, \vec{w})$.
Then $h\cdot h^{-1}=(0,\vec{0})$, so $\vec{v} + \vec{w} = \vec{0}$, and $k_h + l + \mutr(\vec{v}, \vec{w}) = 0$.
 Thus $g\cdot h^{-1}= (k_g + l + \mutr(\vec{v}, \vec{w}), \vec{0}) = (k_g-k_h,\vec{0})$. Note that $g$ commutes with $h^{-1}$, since it commutes with $h$.

Now if $k:=k_g-k_h$ were non-zero then $J_k \neq 0$, which implies that the product of generators giving rise to $g\cdot h^{-1}$ forms a contextual word, since the compatible monoid 
$\CM'(\mu)$ is formed by taking the closure of the set of generators under commuting products. Therefore, we conclude that $k_g$ must equal $k_h$, so the vector component $\vv$ of any element $g\in\CM'(\mu)$ uniquely determines that element's scalar component $s(\vv)$. Moreover, 
given any commuting elements $(s(\vec{v}), \vec{v}), (s(\vec{w}), \vec{w}) \in \CM'(\mu)$, we have $(s(\vec{v}), \vec{v}) \cdot (s(\vec{w}), \vec{w}) = (s(\vec{v}) + s(\vec{w}) + \mutr(\vec{v}, \vec{w}), \vec{v}+\vec{w}) =
 (s(\vec{v}+\vec{w}), \vec{v}+\vec{w})$.

This canonical scalar value assignment allows us to define a left splitting of the compatible monoid $\CM(\mu)$, which is generated by $\CM'(\mu)\cup \{(k,\vec{0}),k\in\ZZ_d\}$. Let 
$l:\CM(\mu)\rightarrow \ZZ_d$ act on a general element $(k,\vv)$ by $l(k,\vv):=k-s(\vv)$. This is clearly well-defined and moreover it is also a non-contextual assignment. 

To see this is indeed the case note that $l$ is a compatible monoid homomorphism: since the identity belongs to $\CM'(\mu)$, we must have $s(\vec{0})=0$. Hence $l(k,\vec{0})=k$ for all $(k,\vec{0})\in\CM(\mu)$. Also,
for commuting elements $(k_1, \vv_1), (k_2, \vv_2) \in \CM(\mu)$:
\begin{align*}
    l((k_1,\vv_1)\cdot (k_2,\vv_2))&=l(k_1+k_2 + \mutr(\vv_1, \vv_2), \vv_1+\vv_2)\\
    &=k_1+k_2 + \mutr(\vv_1, \vv_2)-s(\vv_1+\vv_2)\\
    &=k_1+k_2 + \mutr(\vv_1, \vv_2) -(s(\vv_1)+s(\vv_2) + \mutr(\vv_1,\vv_2))\\
    &=k_1-s(\vv_1)+k_2-s(\vv_2)\\
    &=l(k_1,\vv_1)+l(k_2,\vv_2)
\end{align*}
and so $l$ is a left splitting.
\end{proof}

\begin{example}
\label{KLex}
Consider the following commutator matrices over $\ZZ_2$
\[
\mu_1 \;\; = \;\; \begin{bmatrix}
0 & 0 & 0 & 1 \\
0 & 0 & 1 & 0 \\
0 & 1 & 0 & 0 \\
1 & 0 & 0 & 0
\end{bmatrix}
\qquad
\mu_2 \;\; = \;\; \begin{bmatrix}
0 & 0 & 0 & 1 \\
0 & 0 & 1 & 1 \\
0 & 1 & 0 & 0\\
1 & 1 & 0 & 0
\end{bmatrix}
\qquad
\mu_3 \;\; = \;\; \begin{bmatrix}
0 & 0 & 0 & 1 \\
0 & 0 & 1 & 1 \\
0 & 1 & 0 & 1 \\
1 & 1 & 1 & 0
\end{bmatrix}
\]
with generators $a < b < c < d$. Then  $(w_i, \beta_i, 1)$ is a contextual word for $\CG(\mu_i)$, with 
\[ \begin{array}{ccccccl}
w_1 & = & abdccabd   & \quad & \beta_1 & = &   ((ab)(dc))((ca)(bd)) \\
w_2 & = & bdccaabd   & \quad & \beta_2 & = & (b(dc))((ca)((ab)d))  \\
w_3 & = & dcabbadc  & \quad & \beta_3 & = &  (d(ca))(b(((ba)d)c)) \\
\end{array}
\]
\end{example}

\begin{example}
We show that the Peres-Mermin square arises in commutation groups.

Firstly, we show that the tensor product construction underlying the extension of the Pauli group to $\Pauli_n$ has a simple form in terms of group presentations.
Given a commutator matrix $\mu : \XG^{2} \to \ZZ_d$, we define $\mu_2 : (\XG + \XG)^2 \to \ZZ_d$ by $\mu_2 (x_{i}, y_{i}) = \mu(x,y)$, $\mu_{2}(x_{i}, y_{j}) = 0$, $i \neq j$.
Thus elements in different copies of $\XG$ commute with each other.

We now consider the commutator matrix with $\mu(x,y) = 1$ for generators $x,y$. 
We can define the following Peres-Mermin square over $\CG(\mu_2)$:
\begin{align*}
\begin{array}{ccccc}
x_{1} & \text{---} & x_{2} & \text{---} & x_{1}x_{2} \\
| & & | & & | \\
y_{2} & \text{---}  & y_{1} & \text{---}  &  y_{1}y_{2} \\
| & & | & & | \\
x_{1}y_{2} & \text{---}  &  y_{1}x_{2} & \text{---}  & J_{1}(x_{1}y_{1})(x_{2}y_{2})
\end{array}
\end{align*}
We can verify that exactly the same algebraic properties hold for this square as in the concrete example: each row and column pairwise commutes, the product of each row and the first two columns is $1$, the product of the third column is $-1$ (or more pedantically, $J_{1}$ in additive notation).

We can extract a contextual word from this construction: $((x_{1}y_{2})(y_{1}x_{2}))((x_{1}x_{2})(y_{1}y_{2}))$.
Up to dropping the $J_1$ factor, and interchanging the commuting pair $x_{2}y_{1}$, this can be read off from product of the bottom row of the square.

This provides a more succinct contextuality witness than the usual parity proof, which amounts to taking the product of all the rows and columns.

A similar treatment can be given of the Mermin star \cite{mermin1990simple}.
\end{example}

\subsection{Comparison with other Heisenberg groups}
We can now see why taking the more standard Heisenberg group construction, defined exactly as for $H(\mu)$, but using $\mu$ rather than $\mutr$, would not be suitable for our purposes. Let us denote the construction using $\mu$ rather than $\mutr$ by $H^+(\mu)$. By Proposition~\ref{commutatorprop}, the commutator in $H(\mu)$ is $\mu$, which means that we can have commuting products $gh = hg$ with non-zero but equal phase factors, which is clearly essential for contextual words to exist. By contrast, the commutator in $H^+(\mu)$ is easily seen to be $2\mu$, which means in $Z_2$ that \emph{all products commute}, while in odd orders, no products commute.

\section{No state-independent contextuality in odd characteristics}
\label{odd}
We shall now show that contextual words can only exist over $\ZZ_d$ if $d$ is even.
Moreover, we shall explicitly describe the non-contextual value assignments which exist when $d$ is odd.

In order to prove these results, we will analyze the structure of inversions in bracketed words.

\textbf{Notation} In this section, we will deal exclusively with non-empty words over the generators, $w \in \XG^+$.
We will also write formal sums in variables $v_{x,y}$ to stand in for values of the commutator matrix $\mu(x,y)$.
We will use the following notation. If $S = \{ \lambda_i \}_{i \in I}$ is a family of inversions, then $\sum S \, := \, \sum_{i \in I} v_{x_i,y_i}$, where $\lambda_i$ is an inversion between $x_i$ and $y_i$, $x_i > y_i$.

Given a word $s$, we write $\Inv(s)$ for the set of inversions in $s$. Given words $s$, $t$, $\Inv(s,t)$ is the set of inversions between $s$ and $t$, \ie~the set of all $(w_1, x, w_2, y)$ such that $w_1x$ is a prefix of $s$, $w_2y$ is a prefix of $t$, and $x > y$.

The following is immediate.
\begin{lemma}
\label{invlemm}
For all words $s$, $t$, 
\[ \sum {\Inv(st)} \; = \; \sum {\Inv(s)} + \sum {\Inv(t)} + \sum {\Inv(s,t)} . \]
\end{lemma}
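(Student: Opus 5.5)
The plan is to compare inversions position by position. First I will fix what an inversion is: in keeping with the earlier definition of $\XG$-inversions in the proof of Theorem~\ref{contermthm}, an inversion in a word $w$ is a pair of occurrences of letters $x, y$ with the occurrence of $x$ strictly to the left of that of $y$ and $x > y$. Such a pair contributes the formal variable $v_{x,y}$.

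For $w = st$, every pair of positions $i < j$ in $st$ falls into exactly one of three cases: both positions lie in $s$, both lie in $t$, or $i$ lies in $s$ and $j$ lies in $t$. The case where $i$ is in $t$ and $j$ is in $s$ cannot occur, because every position of $s$ precedes every position of $t$.

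The three cases match the three sets on the right-hand side.
\begin{itemize}
\item Pairs with both positions in $s$ are in bijection with $\Inv(s)$. The inversion condition $x > y$ depends only on the letters and their relative order, and both are preserved when $s$ is regarded as a prefix of $st$.
\item Pairs with both positions in $t$ are in bijection with $\Inv(t)$, by the same reasoning after shifting indices by the length of $s$.
\item Pairs split across $s$ and $t$ are exactly the elements of $\Inv(s,t)$, by its definition. An element $(w_1, x, w_2, y)$ picks out the occurrence of $x$ at the end of the prefix $w_1x$ of $s$ and the occurrence of $y$ at the end of the prefix $w_2y$ of $t$, with $x > y$. Every such choice is an inversion of $st$, since the $x$-occurrence precedes the $y$-occurrence.
\end{itemize}

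Hence $\Inv(st)$ is the disjoint union of (copies of) $\Inv(s)$, $\Inv(t)$ and $\Inv(s,t)$, and each bijection preserves the labelling variable $v_{x,y}$. Summing the labels over a disjoint union of families gives the sum of the separate sums, which yields the stated identity.

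No step is a real obstacle. The only point needing care is bookkeeping: inversions must be treated as occurrences (positions), not as pairs of letters, so that the families are multisets and the union is genuinely disjoint with multiplicities counted correctly.
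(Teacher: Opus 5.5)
Your proof is correct and is the routine argument the paper has in mind when it states the lemma as immediate without proof. You partition the pairs of occurrences in $st$ into those lying within $s$, within $t$, or across from $s$ to $t$, matching $\Inv(s)$, $\Inv(t)$ and $\Inv(s,t)$ with the labels $v_{x,y}$ preserved.
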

In this notation, the equation forcing the global phase factor for a word $s$ to be $k$ is  
\[ \sum \Inv(s) \; = \; k . \]
Also, given words $s$, $t$, we define the ``formal commutator'' of $s$ and $t$ to be
\[ \lcomm s, t \rcomm \; := \; \sum \Inv(s,t) - \sum \Inv(t,s) . \]
The equation forcing $s$ and $t$ to commute is 
$\sum \Inv(st) - \sum \Inv(ts) = 0$. By the previous lemma, this is equivalently written as  $\lcomm s,t \rcomm = 0$.

\begin{lemma}
\label{revlemm}
Let $s^{\dagger}$ be the reverse of a  word $s$.  Then $\sum \Inv(s,t) \, = \, \sum \Inv(s^{\dagger}, t^{\dagger})$.
\end{lemma}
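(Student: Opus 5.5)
The statement is a simple counting fact: the inversions between $s$ and $t$ do not depend on the order of the letters within $s$ or within $t$, only on which letters occur and how often. I would prove it by exhibiting a weight-preserving bijection between $\Inv(s,t)$ and $\Inv(s^{\dagger},t^{\dagger})$.

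First I unpack the definition. An element $(w_1,x,w_2,y)$ of $\Inv(s,t)$ is determined by a position $i$ in $s$ (namely $i=|w_1|+1$, with $s_i=x$) and a position $j$ in $t$ (namely $j=|w_2|+1$, with $t_j=y$), subject to $x>y$. So $\Inv(s,t)$ is in canonical bijection with
\[ \{ (i,j) \mid 1\le i\le |s|,\ 1\le j\le |t|,\ s_i > t_j \}, \]
and the inversion indexed by $(i,j)$ contributes the term $v_{s_i,t_j}$ to $\sum\Inv(s,t)$.

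Next, write $m=|s|$ and $n=|t|$. Then $s^{\dagger}_i = s_{m+1-i}$ and $t^{\dagger}_j=t_{n+1-j}$. The map $(i,j)\mapsto(m+1-i,\,n+1-j)$ is a bijection of $\{1,\dots,m\}\times\{1,\dots,n\}$ onto itself. It sends a pair with $s_i>t_j$ to a pair $(i',j')$ with $s^{\dagger}_{i'}=s_i > t_j=t^{\dagger}_{j'}$, and it preserves the letters involved. It therefore restricts to a bijection $\Inv(s,t)\cong\Inv(s^{\dagger},t^{\dagger})$ under which corresponding inversions contribute the same formal variable $v_{x,y}$. Summing over the bijection gives the equality of the formal sums.

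No step here is a real obstacle. The only point needing care is the formal one: reading the prefix-based definition of $\Inv(s,t)$ as a set of position pairs, so that reversal visibly acts as a bijection. Equivalently, $\sum\Inv(s,t)=\sum_{x>y} c_x(s)\,c_y(t)\,v_{x,y}$, where $c_x(s)$ is the number of occurrences of $x$ in $s$. This expression is manifestly invariant under any permutation of the letters of $s$ and of $t$, and reversal is such a permutation.
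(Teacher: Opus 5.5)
Your proof is correct and is essentially the paper's argument. The paper simply observes that reversal preserves the multiset of letter occurrences in $s$ and in $t$, and hence gives a bijection $\Inv(s,t)\cong\Inv(s^{\dagger},t^{\dagger})$ that preserves the variables $v_{x,y}$. Your explicit map $(i,j)\mapsto(m+1-i,\,n+1-j)$ and the closed form $\sum_{x>y}c_x(s)\,c_y(t)\,v_{x,y}$ make that one-line observation precise.
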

\begin{proof}
Note that $s^{\dagger}$  defines the same multiset of occurrences of generators as $s$, so there will be a bijection between the inversions in $\Inv(s,t)$ and those in $\Inv(s^{\dagger}, t^{\dagger})$, inducing the same multiset of variables $v_{x,y}$.
\end{proof}

We now consider bracketings of  words.
Given a bracketing $\beta$, we define the multiset $\Phi(\beta)$ by 
\[ \Phi(x)  = \es, \qquad \Phi(\beta_1,\beta_2) = \mlb (\bd(\beta_1), \bd(\beta_2)) \mrb  \uplus \Phi(\beta_1) \uplus \Phi(\beta_2) . \]
Given a word $s$ with bracketing $\beta$, we  can write $\Phi(\beta)$ as a family $\{ (s_i,t_i) \}_{i \in I}$ of adjacent subwords of $s$ corresponding to subexpressions of the full bracketing. 
\begin{lemma}
\label{commlemm}
With notation as above, let $s^{\dagger}$ be the reverse of $s$. Then
\begin{equation}
\label{eqrev}
\sum_{i \in I} \, \lcomm s_i,t_i \rcomm \;\; = \;\; \sum \Inv(s) \; - \; \sum \Inv(s^{\dagger}) . 
\end{equation}
\end{lemma}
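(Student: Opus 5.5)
Structural induction on the bracketing $\beta$ of $s$, using Lemma~\ref{invlemm} and Lemma~\ref{revlemm}.

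\emph{Base case.} Here $\beta = x$, so $s = x$ is a single generator. Then $\Phi(\beta) = \es$ and the left hand side of~(\ref{eqrev}) is $0$. Also $s^{\dagger} = s$ has no inversions, so the right hand side is $0$ as well.

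\emph{Inductive step.} Let $\beta = (\beta_1,\beta_2)$ with $u = \bd(\beta_1)$ and $v = \bd(\beta_2)$, so $s = uv$. By the definition of $\Phi$, the left hand side of~(\ref{eqrev}) splits as
\[
\lcomm u,v \rcomm + \sum_{\Phi(\beta_1)} \lcomm s_i,t_i\rcomm + \sum_{\Phi(\beta_2)} \lcomm s_i,t_i \rcomm .
\]
The induction hypothesis rewrites the last two sums as
\[
\sum\Inv(u) - \sum\Inv(u^{\dagger}) + \sum\Inv(v) - \sum\Inv(v^{\dagger}) .
\]
For the right hand side, note that $s^{\dagger} = v^{\dagger}u^{\dagger}$. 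Lemma~\ref{invlemm} then gives
\[
\sum\Inv(s) = \sum\Inv(u) + \sum\Inv(v) + \sum\Inv(u,v)
\]
and
\[
\sum\Inv(s^{\dagger}) = \sum\Inv(v^{\dagger}) + \sum\Inv(u^{\dagger}) + \sum\Inv(v^{\dagger},u^{\dagger}) .
\]
By Lemma~\ref{revlemm}, $\sum\Inv(v^{\dagger},u^{\dagger}) = \sum\Inv(v,u)$. Subtracting the two expansions, the terms in $u$, $u^{\dagger}$, $v$, $v^{\dagger}$ match those coming from the induction hypothesis. What remains is
\[
\sum\Inv(u,v) - \sum\Inv(v,u) = \lcomm u,v\rcomm ,
\]
which is exactly the top-level term on the left. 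This completes the induction.

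\emph{Main point of care.} The only delicate step is applying Lemma~\ref{revlemm} correctly. The cross inversions $\Inv(v^{\dagger},u^{\dagger})$ of the reversed word depend only on the multisets of generator occurrences in $u$ and $v$. They therefore coincide, as a formal sum of variables $v_{x,y}$, with $\Inv(v,u)$. It is this that produces the minus sign in $\lcomm u,v\rcomm$. One should also check that $\Phi$ is a multiset, so repeated subword pairs are counted with multiplicity. This matches the additive splitting used above.
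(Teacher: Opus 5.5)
Your proof is correct and matches the paper's argument step for step. The paper likewise inducts (nominally on the length of $s$, which amounts to your structural induction on $\beta$) and splits at the top-level bracketing $s = uv$. It expands $\sum \Inv(s)$ and $\sum \Inv(s^{\dagger})$ via Lemma~\ref{invlemm} using $s^{\dagger} = v^{\dagger}u^{\dagger}$, and applies Lemma~\ref{revlemm} to replace $\sum \Inv(v^{\dagger},u^{\dagger})$ by $\sum \Inv(v,u)$, leaving exactly $\lcomm u,v \rcomm$.
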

\begin{proof}
By induction on the length of $s$. If $s = x$, then the sums on both sides of~(\ref{eqrev}) are empty, and we have the equation $0 = 0$.

In the inductive case, suppose the top-level bracketing of $s$ is $s = uv$. We can write the bracketings of $u$ and $v$ as families $\{ (u_j, u'_j) \}_{j \in J}$, $\{ (v_k, v'_k) \}_{k \in K}$.
Then, applying the induction hypothesis:
\[ \begin{array}{lcl}
\sum_{i \in I} \, \lcomm s_i,t_i \rcomm & = & \lcomm u,v \rcomm \; + \; \sum_j \lcomm u_j, u'_j \rcomm  \; + \; \sum_k \lcomm v_k, v'_k \rcomm \\
& = & (\sum \Inv(u,v) - \sum \Inv(v,u)) \; + \; (\sum \Inv(u) - \sum \Inv(u^{\dagger})) \; + \; (\sum \Inv(v) - \sum \Inv(v^{\dagger})) .
\end{array}
\]
By Lemma~\ref{invlemm}, $\sum \Inv(s) \, = \,\sum \Inv(uv) \, = \, \sum \Inv(u, v) + \sum \Inv(u) + \sum \Inv(v)$.
Since $s^{\dagger} = v^{\dagger}u^{\dagger}$, applying Lemma~\ref{invlemm} again yields $\sum \Inv(s^{\dagger})  \; = \; \sum \Inv(v^{\dagger}, u^{\dagger}) + \sum \Inv(v^{\dagger}) + \sum \Inv(u^{\dagger})$.
Applying Lemma~(\ref{revlemm}) and rearranging terms yields~(\ref{eqrev}).
\end{proof}

\begin{lemma}
\label{dlemm}
Let $w$ be a word in which each generator $x$ occurs $n_x$ times, modulo $d$. Then $$\sum \Inv(w) + \sum \Inv(w^{\dagger})=\sum_{x<y} n_{y} n_{x} v_{yx}$$ 
In particular, when each generator occurs a multiple of $d$ times, we have $\sum \Inv(w) \, = \, - \sum \Inv(w^{\dagger})$.
\end{lemma}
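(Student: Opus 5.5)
The plan is to count, for each unordered pair of distinct generators $y > x$, how many inversions involve that pair in $w$ and in $w^{\dagger}$ together, and show the total is the number of ordered pairs of occurrences.

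Fix generators $x < y$, and let $N_x, N_y$ be the actual numbers of occurrences of $x$ and $y$ in $w$, so $n_x \equiv N_x$ and $n_y \equiv N_y \pmod d$. Consider the set $P_{x,y}$ of all pairs (occurrence of $y$, occurrence of $x$) in $w$; it has cardinality $N_y N_x$. Each such pair either has the $y$-occurrence to the left of the $x$-occurrence in $w$, in which case it is an inversion of $w$ contributing $v_{y,x}$, or to the right. In the latter case, in the reversed word $w^{\dagger}$ the order of the two occurrences is swapped, so the $y$-occurrence precedes the $x$-occurrence there, and the pair is an inversion of $w^{\dagger}$, again contributing $v_{y,x}$. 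Reversal gives a bijection between occurrences in $w$ and in $w^{\dagger}$, so every element of $P_{x,y}$ is counted exactly once in $\Inv(w) \sqcup \Inv(w^{\dagger})$, and conversely every inversion of $w$ or $w^{\dagger}$ between $y$ and $x$ arises this way. Occurrences of the same generator never form inversions, since inversions require strict $>$. Hence the contribution of the pair $\{x,y\}$ to $\sum \Inv(w) + \sum \Inv(w^{\dagger})$ is $N_y N_x \, v_{y,x}$.

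Summing over all pairs $x<y$ gives $\sum_{x<y} N_y N_x v_{y,x}$. Since we read the variables as values $\mu(y,x) \in \ZZ_d$, the coefficients matter only modulo $d$, so we may replace $N_y N_x$ by $n_y n_x$, yielding the stated formula. For the second claim, if every $N_x$ is a multiple of $d$ then every $n_x = 0$, so the right-hand side vanishes and $\sum \Inv(w) = -\sum \Inv(w^{\dagger})$.

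The only delicate step is the bookkeeping in the bijection: one must check that the reversal map sends the pair with $x$ before $y$ in $w$ to an inversion of $w^{\dagger}$ and not double-count, which is clear once occurrences are labelled by position $i \mapsto |w|+1-i$. Interpreting the formal sum modulo $d$ is a convention that should be stated explicitly.
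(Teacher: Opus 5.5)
Your proof is correct and is essentially the paper's argument. The paper fixes each occurrence of $y$, splits the occurrences of $x$ into those after it (inversions of $w$) and those before it (inversions of $w^{\dagger}$), and sums $m_i + n_i = n_x$ over the $n_y$ occurrences of $y$, which is your pair-counting organized by $y$-occurrence; your explicit distinction between the actual counts $N_x$ and their residues $n_x$ modulo $d$ is slightly more careful than the paper's wording.
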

\begin{proof}
In order to count the number of occurrences of the variable $v_{yx}$, consider each occurrence of $y$ within $w$. For the $i^{th}$ occurrence we can write $w = u_iyv_i$ and for each $x$ such that $x < y$, the $m_i$ occurrences of $x$ in $v_i$ will yield $m_i$ inversions in $w$, while the occurrences of $x$ in $u_i$ will yield $n_i$ inversions in $w^{\dagger}$. Thus, the total multiplicity of $v_{x,y}$ in $\sum \Inv(w)$ will be $\sum_i m_i$, where $i$ ranges over occurrences of $y$ in $w$. Similarly, the total multiplicity of $v_{x,y}$ in $\sum \Inv(w^{\dagger})$ will be  $\sum_i n_i$. Since for each $i$, $m_i + n_i = n_x$ we will have the overall multiplicity $$n_{yx}=\sum_{i=1}^{n_y} m_i+ \sum_{i=1}^{n_y} n_i = \sum_{i=1}^{n_y} n_x = n_y  n_x.$$ When each $n_x$ is a multiple of $d$ we therefore have $\sum \Inv(w) + \sum \Inv(w^{\dagger})=0 \pmod d$.
\end{proof}

\begin{theorem}
\label{cwordth}
If $(w, \beta, k)$ is a contextual word over $\ZZ_d$, then $d$ is even.
\end{theorem}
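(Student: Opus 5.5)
Suppose $(w,\beta,k)$ is a contextual word, and write $\Phi(\beta)=\{(s_i,t_i)\}_{i\in I}$ for the family of adjacent subword pairs determined by the bracketing. The plan is to combine Lemma~\ref{commlemm} with Lemma~\ref{dlemm} to show $2k=0$ in $\ZZ_d$. Since $k\neq 0$, this forces $d$ to be even.

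First, we translate the hypotheses into equations over $\ZZ_d$. Each variable $v_{x,y}$ is interpreted as $\mu(x,y)$; all identities among formal sums established so far become valid identities in $\ZZ_d$ under this interpretation. We use two facts. By the normal form analysis (Theorem~\ref{srsthm} together with Proposition~\ref{GHprop}), $w\eqr J_k$ means exactly that the accumulated phase $\sum\Inv(w)$ equals $k$, because the vector part of $w$ is $\vec 0$. Likewise, the bracketing condition $uv\eqr vu$ for each pair $(s_i,t_i)\in\Phi(\beta)$ is equivalent to $\lcomm s_i,t_i\rcomm=0$, as observed after Lemma~\ref{invlemm}. One point needs care: $\lcomm s,t\rcomm$ must evaluate to the group commutator phase, which is $\mutr(\kv,\lv)-\mutr(\lv,\kv)=\mu(\kv,\lv)$ by Proposition~\ref{commutatorprop}. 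I would check this briefly by comparing $\sum\Inv(st)$ with $\sum\Inv(ts)$ via Lemma~\ref{invlemm}; the contributions $\sum\Inv(s)$ and $\sum\Inv(t)$ cancel in the difference.

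Next, we apply Lemma~\ref{commlemm} to $w$ with the bracketing $\beta$. Every summand on the left-hand side vanishes, so $\sum\Inv(w)-\sum\Inv(w^\dagger)=0$, that is, $\sum\Inv(w^\dagger)=k$. On the other hand, every generator occurs a multiple of $d$ times in $w$, so Lemma~\ref{dlemm} gives $\sum\Inv(w^\dagger)=-\sum\Inv(w)=-k$. Together these yield $k=-k$, so $2k\equiv 0 \pmod d$. If $d$ were odd, $2$ would be invertible modulo $d$, which would force $k=0$ and contradict $k\neq 0$. Hence $d$ is even.

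The main obstacle is not the algebra, which is a few lines once the lemmas are available. It is making sure the formal sums of inversion variables really compute the phases and commutators in $\CG(\mu)$. In particular, the reduction to normal form must account for every inversion exactly once with cost $\mu(x,y)$, and the rules $x^d\rew 1$ must not introduce extra phase. I would argue the first point from the proof of Proposition~\ref{GHprop}, which counts exactly these inversion costs. For the second, note that $x^d\rew 1$ contributes no phase, and removing $d$ equal letters changes each inversion count by a multiple of $d$.
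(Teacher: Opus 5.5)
Your proposal is correct and follows essentially the same route as the paper. Lemma~\ref{commlemm}, with all formal commutators vanishing, gives $\sum\Inv(w)=\sum\Inv(w^{\dagger})$; Lemma~\ref{dlemm} gives $\sum\Inv(w^{\dagger})=-\sum\Inv(w)$; together they yield $2k=0$ with $k\neq 0$, forcing $d$ even. Your additional checks that the formal inversion sums really compute the phases and commutators in $\CG(\mu)$ (via Proposition~\ref{GHprop} and Proposition~\ref{commutatorprop}, and the observation that $x^d\rew 1$ adds no phase) are sound, and simply make explicit what the paper takes for granted.
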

\begin{proof}
Since $(w, \beta, k)$ is contextual, we have $\lcomm s,t \rcomm = 0$ for all bracketed subexpressions $(s,t)$ in $\beta$. Hence summing over all such subexpressions yields $\sum_i \lcomm s_i, t_i \rcomm = 0$.
By Lemma~\ref{commlemm}, this implies that $\sum \Inv(w) \, - \, \sum \Inv(w^{\dagger}) = 0$. Applying Lemma~\ref{dlemm} yields $2 \sum \Inv(w) = 0$. The contextuality of $(w, \beta, k)$ forces $\sum \Inv(w) = k$, where $k \neq 0$. We can only have a non-zero solution of $2k = 0 \pmod d$ if $d$ is even.
\end{proof}

\begin{theorem}
\label{commonphaselemm}
If $w_1$ and $w_2$ are words in $\XG^+$ formed out of commuting products and which have the same multiset of generators, modulo $d$, then if $d$ is odd their overall commutation factors are equal.
\end{theorem}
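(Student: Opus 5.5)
The plan is to combine Lemma~\ref{commlemm} with the first part of Lemma~\ref{dlemm}, which holds for arbitrary multiplicities. Let $w_1, w_2 \in \XG^+$ be formed out of commuting products, witnessed by bracketings $\beta_1, \beta_2$. Assume also that each generator $x$ occurs $n_x$ times modulo $d$ in both words. The overall commutation factor of $w_j$ is the phase $k_j$ with $\theta(w_j) = J_{k_j} x_1^{n_1}\cdots x_n^{n_n}$, that is, $k_j = \sum \Inv(w_j)$ evaluated at $v_{x,y} = \mu(x,y)$. Reducing exponents with $x^d \rew 1$ adds no phase, because $d\cdot a = 0$. The two normal forms therefore have the same vector part, and it suffices to show $\sum \Inv(w_1) = \sum \Inv(w_2)$.

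\textbf{Step 1.} Every bracketed subexpression $(s_i,t_i)$ of $\beta_j$ commutes, so $\lcomm s_i,t_i\rcomm = 0$ in $\ZZ_d$ for each $i$. Summing over $i$ and applying Lemma~\ref{commlemm} gives
\[ \sum \Inv(w_j) = \sum \Inv(w_j^{\dagger}) . \]

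\textbf{Step 2.} The first equation of Lemma~\ref{dlemm} gives
\[ \sum \Inv(w_j) + \sum \Inv(w_j^{\dagger}) = \sum_{x<y} n_y n_x \mu(y,x) . \]
Combining this with Step~1 yields
\[ 2\sum \Inv(w_j) = \sum_{x<y} n_y n_x \mu(y,x) . \]
The right-hand side depends only on the multiset $(n_x)_x$ modulo $d$: the products $n_y n_x$ modulo $d$ are unchanged when either factor is shifted by a multiple of $d$. So the right-hand side is the same value $c$ for $j=1$ and $j=2$.

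\textbf{Step 3.} Since $d$ is odd, $2$ is invertible in $\ZZ_d$. Hence $\sum \Inv(w_1) = 2^{-1}c = \sum \Inv(w_2)$, and the overall commutation factors agree.

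The main point to check carefully is that Lemma~\ref{dlemm} may be applied with multiplicities taken modulo $d$, and that each $\sum\Inv(w_j)$ is genuinely the phase of the normal form of $w_j$ (Proposition~\ref{GHprop}). A second point is that Lemma~\ref{commlemm} is a formal identity in the variables $v_{x,y}$ and is used here only after substituting $\mu$. Both are routine. The only place oddness of $d$ is used is Step~3, mirroring Theorem~\ref{cwordth}. The argument also shows the explicit formula $k = 2^{-1}\sum_{x<y} n_x n_y \mu(y,x)$ for the common phase, which I would record for constructing the non-contextual value assignment.
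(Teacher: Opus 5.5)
Your proposal is correct and follows the paper's proof essentially step for step. Lemma~\ref{commlemm} turns the commuting bracketing into $\sum \Inv(w_j) = \sum \Inv(w_j^{\dagger})$, Lemma~\ref{dlemm} then expresses $2\sum \Inv(w_j)$ as a quantity depending only on the multiplicities modulo $d$, and the invertibility of $2$ in $\ZZ_d$ for odd $d$ finishes the argument. Your extra checks make explicit two points the paper leaves implicit: that $\sum\Inv(w_j)$ evaluated at $\mu$ is the phase of the normal form, and that the right-hand side of Lemma~\ref{dlemm} depends only on counts modulo $d$. The closed formula $2^{-1}\sum_{x<y} n_x n_y \mu(y,x)$ for the common phase is also correct.
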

\begin{proof}
Since $w_1$ and $w_2$ are formed out of commuting products, each of the formal commutators corresponding to the sub-expressions of $w_1$ and $w_2$ is equal to zero, hence $\sum \Inv(w_i)=\sum \Inv(w_i^{\dagger})$. From Lemma \ref{dlemm} it follows that $$2\sum \Inv(w_i)=\sum_{x<y} n^i_{y}n^i_{x}\cdot v_{yx}$$ 
The right hand side of this equation is the same for $w_1$  and $w_2$, since the number $n^i_x$ of occurrences of each generator is equal modulo $d$ in the two words.
Since  $d$ is odd, the equation $2x = k \pmod d$ has a unique solution for any $k \in \ZZ_d$, and  $\sum \Inv(w_1)=\sum \Inv(w_2)$.
\end{proof}

\begin{theorem}
Let $\mu$ be a commutator matrix over $\ZZ_d$. If $d$ is odd, there is a non-contextual value assignment $\val : \CM(\mu) \to \ZZ_d$. 
\end{theorem}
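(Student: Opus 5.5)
The quickest route uses Theorem~\ref{cwordth} together with Theorem~\ref{cwsicth}. When $d$ is odd, Theorem~\ref{cwordth} rules out any contextual word for $\CG(\mu)$. The converse direction of Theorem~\ref{cwsicth} then supplies a non-contextual value assignment $l : \CM(\mu) \to \ZZ_d$, namely a left splitting of~(\ref{seseqCM}). Since the text announces that the assignment will be described explicitly, I would also give a concrete formula and check it directly.

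The explicit candidate comes from Lemma~\ref{dlemm} and Theorem~\ref{commonphaselemm}. Since $d$ is odd, $2$ is invertible mod $d$. For $\kv \in \ZZ_d^n$ define
\[ s(\kv) \; := \; 2^{-1}\,\mutr(\kv,\kv) \; = \; 2^{-1}\sum_{i>j} k_i k_j\, \mu(x_i,x_j) , \]
and set $\val(k,\kv) := k - s(\kv)$ on $\CM(\mu) \subseteq H(\mu)$. This formula is chosen so that, for any word $w$ built by commuting products with generator counts $\kv$, Lemma~\ref{dlemm} and Lemma~\ref{commlemm} give $2\sum\Inv(w) = \sum_{x<y} n_y n_x v_{yx}$. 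In $H(\mu)$-coordinates the phase of $w$ is then forced to equal $s(\kv)$. This is exactly the computation in the proof of Theorem~\ref{commonphaselemm}, which shows the phase depends only on $\kv$ mod $d$.

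Next I would check the conditions directly. First, $\val(k,\vec{0}) = k$, so $\val \circ i = \id$. Second, take commuting $(k_1,\kv_1),(k_2,\kv_2)$; by Proposition~\ref{commutatorprop}, commuting means $\mutr(\kv_1,\kv_2) = \mutr(\kv_2,\kv_1)$. Bilinearity gives
\[ \mutr(\kv_1+\kv_2,\kv_1+\kv_2) = \mutr(\kv_1,\kv_1)+\mutr(\kv_2,\kv_2)+2\mutr(\kv_1,\kv_2), \]
so $s(\kv_1+\kv_2) = s(\kv_1)+s(\kv_2)+\mutr(\kv_1,\kv_2)$. The calculation at the end of the proof of Theorem~\ref{cwsicth} then shows $\val$ is additive on commuting products. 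This check actually works on the whole compatible monoid of commuting pairs in $H(\mu)$, not only on $\CM(\mu)$.

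The one point needing care is that $s$ is defined using integer representatives $0 \le k_i < d$ while the entries live in $\ZZ_d$. I expect this to be harmless, because $\mutr$ is a bilinear form over the ring $\ZZ_d$ and so is well defined on $\ZZ_d^n$. So the real content is just the invertibility of $2$, which is precisely where oddness of $d$ enters. This matches the phase $2^{-1}\sum n_y n_x v_{yx}$ from Theorem~\ref{commonphaselemm}.
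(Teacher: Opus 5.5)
Your first paragraph is exactly the paper's proof, which just combines Theorem~\ref{cwordth} with the converse direction of Theorem~\ref{cwsicth}.

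The explicit construction you add is correct and amounts to a genuinely different, self-contained argument. By Proposition~\ref{commutatorprop}, a commuting pair satisfies $\mutr(\kv_1,\kv_2)=\mutr(\kv_2,\kv_1)$. Your polarization identity then makes $\val(k,\kv)=k-2^{-1}\mutr(\kv,\kv)$ additive on that pair. The condition $\val\circ i=\id$ is immediate, and well-definedness is automatic because $\mutr$ is a bilinear form over $\ZZ_d$.

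Compared with the paper's route, your argument has three advantages:
\begin{itemize}
\item It bypasses both the bracketing combinatorics behind Theorem~\ref{cwordth} and the indirect converse of Theorem~\ref{cwsicth}, with its auxiliary monoid $\CM'(\mu)$ and its argument that the vector part determines the phase.
\item It supplies the closed formula that Theorem~\ref{commonphaselemm} only characterizes implicitly. Your computation via Lemmas~\ref{commlemm} and~\ref{dlemm} in fact shows that on $\CM(\mu)$ it agrees with the $l$ built in the proof of Theorem~\ref{cwsicth}.
\item It proves more: a left splitting on the compatible monoid of all commuting pairs in $H(\mu)$. So the conclusion does not depend on contexts being generated from the chosen generators $\XG$, which bears on the caveat the paper raises about changes of basis.
\end{itemize}

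What the paper's route buys in exchange is uniformity with the even case. There $2$ is not invertible, and contextual words are the relevant witnesses.
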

\begin{proof}
By Theorem~\ref{cwordth}, if $d$ is odd, there are no contextual words over $\CG(\mu)$. By Theorem~\ref{cwsicth}, there is a non-contextual value assignment $\val : \CM(\mu) \to \ZZ_d$. 
\end{proof}


\section{Contextuality in even characteristics}

We now show that contextual words exist in abundance in even characteristics.
Firstly, we characterize the circumstances under which non-contextual value assignments \emph{do} arise.


Given $S \subseteq \CM(\mu)$, we define $\Zen(S) := \{ a \in S \mid \forall b \in S. \, a \odot b \}$. A \emph{graph} will always mean a reflexive undirected graph, \ie a set of vertices with a reflexive, symmetric relation.

A \emph{cluster graph} is a coproduct (disjoint union) of complete graphs. Equivalently, it is a graph in which the adjacency relation is transitive, so that the maximal cliques are the equivalence classes, and hence disjoint, with no adjacencies between them.

We will show that,  if $(\CM(\mu) \setminus \Zen(\CM(\mu)), \odot)$ is a cluster graph, every empirical model over $\CM(\mu)$ has global sections, which are exactly non-contextual value assignments.

We briefly review what we need of empirical models; for further details, see \cite{abramsky2011sheaf,DBLP:conf/csl/AbramskyBKLM15}.
A maximal clique over the graph $(\CM(\mu), \odot)$ is a (total) commutative sub-monoid of $\CM(\mu)$: closure under products is implied by maximality. Moreover, it contains $\Zen(\CM(\mu))$.
Let $\MM$ be the set of maximal cliques. Note that the union of this family is $\CM(\mu)$.

A (possibilistic) empirical model over $\CM(\mu)$ assigns to each $C \in \MM$ a non-empty set of homomorphisms $s : C \to \ZZ_d$ which split the inclusion $\ZZ_d \hookrightarrow C$. We write $\{ e_C \}_{C \in \MM}$ for this family of sets of homomorphisms. The family is moreover required to satisfy the following \emph{local consistency} property: for all $C, C' \in \MM$, $e_C |_{C \cap C'} = e_{C'} |_{C \cap C'}$, where e.g
\[ e_C |_{C \cap C'} := \{ s |_{C\cap C'} \mid s \in e_C \} . \]
We say that such an empirical model is non-contextual (in the sense of not strongly contextual \cite{abramsky2011sheaf}) if there exists a \emph{global section}: a homomorphism $s : \CM(\mu) \to \ZZ_d$ such that $s |_C \in e_C$ for all $C \in \MM$.
Such a global section is necessarily a left splitting, and hence a non-contextual value assignment for $\CM(\mu)$.

\begin{theorem}
\label{algclusterprop}
 If $(\CM(\mu) \setminus \Zen(\CM(\mu)), \odot)$ is a cluster graph, then every empirical model over $\CM(\mu)$ is non-contextual.
\end{theorem}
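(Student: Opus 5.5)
The plan is to build a global section by gluing local choices along the cluster structure. Write $Z := \Zen(\CM(\mu))$ and $K_1, K_2, \ldots$ for the connected components (equivalently, the maximal cliques) of the cluster graph $(\CM(\mu) \setminus Z, \odot)$.

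The first step is to show that the maximal cliques of $(\CM(\mu), \odot)$ are exactly the sets $C_i := K_i \cup Z$. Every element of $Z$ is adjacent to everything, so any clique $C$ satisfies $C \cup Z$ is again a clique. Hence every maximal clique contains $Z$. Its non-central part is a clique in the cluster graph, so it lies inside a single $K_i$. Maximality then forces the non-central part to equal $K_i$. Conversely, each $K_i \cup Z$ is a clique and is maximal, since any element outside it is either in $Z$ or in some $K_j$ with $j \neq i$, and the latter is non-adjacent to $K_i$. (If $\CM(\mu) = Z$, there is a single context and the claim is immediate.) Consequently, for $i \neq j$ we get $C_i \cap C_j = Z$.

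The second step uses local consistency. Pick any $s_1 \in e_{C_1}$ and set $t := s_1|_Z$. For every $i$, local consistency gives $e_{C_i}|_Z = e_{C_1}|_Z$, so there is some $s_i \in e_{C_i}$ with $s_i|_Z = t$. Now define $s : \CM(\mu) \to \ZZ_d$ by $s(a) := s_i(a)$ for $a \in C_i$. This is well defined because distinct contexts overlap only in $Z$, where all the $s_i$ agree with $t$. By construction $s|_{C_i} = s_i \in e_{C_i}$ for every $i$.

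The third step checks that $s$ is a compatible-monoid homomorphism splitting the inclusion. The splitting property is easy: scalars are central and so lie in $Z$, and $s$ agrees there with each $s_i$, each of which splits $\ZZ_d \hookrightarrow C_i$. For the homomorphism property, take $a \odot b$. Then $\{a,b\}$ is a clique, so it is contained in some maximal clique $C_i$. Because $C_i$ is a commutative submonoid, $ab \in C_i$, and so $s(ab) = s_i(ab) = s_i(a) + s_i(b) = s(a) + s(b)$.

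The only point that needs genuine care is the first step: the clique-structure argument, together with the fact that maximal cliques are closed under products (the paper states this, and it is what makes each $C_i$ a commutative submonoid on which $s_i$ is a homomorphism). Everything else is direct gluing.
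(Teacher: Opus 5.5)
Your proposal is correct and follows essentially the same route as the paper's proof. You identify the maximal contexts as $K_i \cup Z$ with pairwise intersections $Z$, use local consistency on $Z$ to choose local sections that agree there, glue them, and check the homomorphism condition inside a common maximal clique. You are more explicit than the paper about why the maximal cliques have this form and why the glued map splits the scalars, but the argument is the same.
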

\begin{proof}
Let $N := \CM(\mu) \setminus \Zen(\CM(\mu)), Z := \Zen(\CM(\mu))$. Each maximal clique of $(\CM(\mu), \odot)$ is of the form $C \sqcup Z$, where $C$ is a maximal clique of $(N, \odot)$. 
Let $e$ be an empirical model, and consider $s \in e_{C \sqcup Z}$ for $C \sqcup Z \in \MM$. We can write $s = [s_C, s_Z] : C \sqcup Z \to \ZZ_d$.  By the local consistency property for $e$, for any $C' \neq C$ maximal in $(N, \odot)$, there is $s_{C'} : C' \to \ZZ_d$ such that $s' = [s_{C'}, s_{Z}] : C' \sqcup Z \to \ZZ_d \in e_{C' \sqcup Z}$. Morever, as $C'$, $C''$ range over maximal cliques of  $(N, \odot)$, since $C' \cap C'' = \es$, $s' = [s_{C'}, s_{Z}]$ is compatible with $s'' = [s_{C''}, s_{Z}]$, \ie $s' |_{Z} = s_Z = s'' |_{Z}$. Thus we obtain a pairwise compatible family of sections $\{ [s_C, s_Z] \}_{C \sqcup Z \in \MM}$. 

Since $\MM$ covers $\CM(\mu)$, this family determines a unique function $s : \CM(\mu) \to \ZZ_d$. We must check the homomorphism condition. This holds because whenever $g \odot h$, $\{ g, h \} \subseteq C$ for some $C \in \MM$, hence $s(gh) = s_C(gh) = s_C(g) + s_C(h) = s(g) + s(h)$.
\end{proof}
Note that in the last part of the argument, we were verifying the sheaf property for the cover $\MM$ over the presheaf of left splittings on cliques in $(\CM(\mu), \odot)$.

One remaining question is whether empirical models over $\CM(\mu)$  actually exist.\footnote{The issue is whether we can have a non-empty model satisfying the local consistency conditions.} We shall discuss unitary representations of commutation groups in the next section. Given a quantum realization of the associated measurements, we can always obtain an empirical model by applying any quantum state.

\subsection{Positive results}
We now assume that $(\CM(\mu) \setminus \Zen(\CM(\mu)), \odot)$ is \emph{not} a cluster graph, which means that there are elements $a,b,c$ such that $a \odot b$, $a \odot c$, but not $b \odot c$. 
Since $a$ is not in $\Zen(\CM(\mu))$, there must be some $d$ such that not $a \odot d$. Allowing for the various possibilities for commutativity of $d$ with $b$ and $c$, up to relabelling this gives us the
following three compatibility graphs \cite{kirby2021variational}:
\begin{equation}
\label{graphseq}
\begin{tikzcd}
a \ar[r, dash] \ar[d, dash] & b \\
c & d
\end{tikzcd}
\qquad
\begin{tikzcd}
a \ar[r, dash] \ar[d, dash] & b \\
c \ar[r, dash] & d
\end{tikzcd}
\qquad
\begin{tikzcd}
a \ar[r, dash] \ar[d, dash] & b  \ar[d, dash] \\
c \ar[r, dash] & d
\end{tikzcd}
\end{equation}

\subsubsection{The $\ZZ_2$ case}

In the case where $\mu$ is a commutator matrix over $\ZZ_2$, we can give a definitive characterisation of contextuality in $\CM(\mu)$. This follows similar lines to \cite{kirby2021variational}, in the general setting of commutation groups.
\begin{theorem}
If $\mu$ is a commutator matrix over $\ZZ_2$, then the following are equivalent:
\begin{enumerate}
\item $\CM(\mu)$ is contextual.
\item There are contextual words over $\CM(\mu)$.
\item The graph $(\CM(\mu),\odot)$ contains one of the graphs in~(\ref{graphseq}) as an induced sub-graph.
\end{enumerate}
\end{theorem}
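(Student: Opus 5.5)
The equivalence (1)$\Leftrightarrow$(2) is Theorem~\ref{cwsicth}, so only (3) needs to be tied in. I would prove (2)$\Rightarrow$(3) via Theorem~\ref{algclusterprop}, and (3)$\Rightarrow$(2) by explicitly constructing a contextual word from an induced subgraph, following \cite{kirby2021variational}.

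\textbf{(1)$\Rightarrow$(3).} Argue by contraposition. Suppose $(\CM(\mu),\odot)$ contains none of the graphs in~(\ref{graphseq}) as an induced subgraph. By the case analysis preceding~(\ref{graphseq}), $(\CM(\mu)\setminus\Zen(\CM(\mu)),\odot)$ is then a cluster graph. Any violating triple $a,b,c$ with a noncentral witness $d$ yields one of the three patterns, and $d\neq b,c$ since $a\odot b$ and $a\odot c$.

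By Theorem~\ref{algclusterprop}, every empirical model is then non-contextual. To extract a non-contextual value assignment from this, we need one empirical model to exist. I would avoid depending on the later representation section and build a model directly. On each maximal clique $C\sqcup Z$, let $e_{C\sqcup Z}$ be the set of all left splittings of $\ZZ_2\hookrightarrow C\sqcup Z$ that restrict on $Z$ to one fixed splitting $s_Z$ of $\ZZ_2\hookrightarrow Z$.

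Two facts make this work. First, $Z$ is an abelian group containing $\ZZ_2$ (closed under inverses, by the finite-order argument in Theorem~\ref{cwsicth}), and a splitting $s_Z$ exists: in $\ZZ_2$ with $J_1=(1,\vec 0)$, elements of an abelian group with $\ZZ_2$ inside split as long as $\ZZ_2$ is a direct summand. This holds because all elements have order dividing 4, and $J_1$ is not a square of a commuting element whenever no contextual word exists. Second, $s_Z$ extends to each $C\sqcup Z$, and distinct cliques meet only in $Z$, so local consistency holds.

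\textbf{The main obstacle} is whether $J_1$ is a square in $Z$. Such a square would be a length-2 witness $xx\eqr J_1$, but generators satisfy $x^2=1$ and $\mu(x,x)=0$. More generally, for $g=(k,\vv)$ we have $g^2=(\mutr(\vv,\vv),\vec 0)$, which can be $J_1$. If that happens, I would instead directly note that $g\cdot g$ gives a contextual word, hence (2). Then I would finish (1)$\Rightarrow$(3) by showing that such $g$, being non-scalar, can be paired with the non-commuting elements to produce an induced subgraph. Handling this carefully is where I expect the real work to lie.

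\textbf{(3)$\Rightarrow$(2).} Take an induced subgraph $a,b,c,d$ of one of the three shapes. Using Proposition~\ref{commutatorprop} over $\ZZ_2$, non-commutation means a commutator equal to $J_1$. Form products such as $ab$, $ac$, $cd$, $bd$, and so on. In each case these give four to six pairwise-commuting elements arranged as a Peres–Mermin-like configuration, and the bracketing is read off as in Example~\ref{KLex}.

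Concretely, for the first graph take $w = (ab)(dc)\cdot(ca)(bd)$-style words. Verify the bracketing conditions with $\lcomm\cdot,\cdot\rcomm$ computed via bilinearity of $\mu$. Each generator appears twice, and the total phase is $J_1$, which is computed via Lemma~\ref{commlemm} and Lemma~\ref{dlemm}: $\sum\Inv(w)$ is forced, and a direct count shows it is odd.

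Since $a,b,c,d$ are elements of $\CM(\mu)$ rather than generators, I would expand each of them into its generator word with its own bracketing. Multiplicities stay even because each element is used twice. This yields the contextual word.
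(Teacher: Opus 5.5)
Your skeleton matches the paper's proof: (1)$\Leftrightarrow$(2) from Theorem~\ref{cwsicth}, (1)$\Rightarrow$(3) by contraposition via Theorem~\ref{algclusterprop}, and (3)$\Rightarrow$(2) via Example~\ref{KLex} plus substitution. As written, however, two steps do not close, and both lack the same fact: over $\ZZ_2$, every element of $\CM(\mu)$ is an involution.

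To see this, put $q(\vv):=\mutr(\vv,\vv)$. Since $\mutr+\mutr^{\intercal}\equiv\mutr-\mutr^{\intercal}=\mu \pmod 2$, we get $q(\vec u+\vec v)=q(\vec u)+q(\vec v)+\mu(\vec u,\vec v)$. Also $q(e_i)=q(\vec 0)=0$, and commuting elements have $\mu(\vec u,\vec v)=0$ by Proposition~\ref{commutatorprop}. By induction over commuting products, $q$ vanishes on all vector parts of elements of $\CM(\mu)$. Hence $g^2=(q(\vv),\vec 0)=(0,\vec 0)$ for every $g\in\CM(\mu)$.

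Your remark that $g^2$ ``can be $J_1$'' is true in $H(\mu)$ but false in $\CM(\mu)$. So the case you call ``the real work'' is vacuous, and your proposed remedy (manufacturing an induced subgraph from such a $g$) aims at the wrong target. Justifying ``$J_1$ is not a square'' by ``whenever no contextual word exists'' is also circular, since that direction is trying to establish non-contextuality. With the fact in hand, $Z$ and each clique $C\sqcup Z$ are $\ZZ_2$-vector spaces containing $J_1\neq 1$. So $s_Z$ exists and extends linearly to every clique, and your gluing (cliques meet only in $Z$) yields a left splitting outright. This is a real improvement on the paper. Its bare appeal to Theorem~\ref{algclusterprop} tacitly presupposes that some empirical model exists, and obtaining one from the unitary representation again needs $\rho(g)^2$ to be the identity on $\CM(\mu)$, i.e.\ the same fact.

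In (3)$\Rightarrow$(2), Lemmas~\ref{commlemm} and~\ref{dlemm} give only $2\sum\Inv(w)=0$, which is vacuous in $\ZZ_2$, so they force nothing. And once $a,b,c,d$ are not generators, there is no general ``direct count'' of inversions of the expanded word.

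The missing argument is as follows. Let $\nu$ be the $4\times 4$ matrix over $\ZZ_2$ recording non-commutation among $a,b,c,d$; it is $\mu_3$, $\mu_2$, $\mu_1$ for the three graphs in~(\ref{graphseq}), in that order. We have $a^2=b^2=c^2=d^2=1$ (this is exactly where the involution fact is used) and $xy=J_{\mu(\vec x,\vec y)}yx$ (Proposition~\ref{commutatorprop}). So sending the abstract generators to $a,b,c,d$ defines a scalar-preserving homomorphism $\CG(\nu)\to\CG(\mu)$, and the image of $w_i\eqr J_1$ is $J_1$. Now write each of $a,b,c,d$ as a central $J_k$ times a bracketed word in $\XG^+$; deleting scalar generators preserves the commutation conditions. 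The $J_k$'s cancel because each letter occurs twice in $w_i$, giving a genuine contextual word.

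Finally, $((ab)(dc))((ca)(bd))$ requires $c\odot d$ and $b\odot d$. So it witnesses the third (4-cycle) graph, not the first; the first graph needs $\beta_3$. The paper's own proof is equally terse at these points, but your version gives incorrect reasons exactly where this argument is needed.
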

\begin{proof}
The implication $(2) \Rightarrow (1)$ follows from Theorem~\ref{cwsicth}. By contraposition, $(1) \Rightarrow (3)$ follows from Theorem~\ref{algclusterprop}.
Now assume $(3)$. If $a$, $b$, $c$, $d$ are generators, the matrices $\mu_i$ given in Example~\ref{KLex} correspond to the graphs in~(\ref{graphseq}), and the corresponding contextual words given in the Example show that $(2)$ holds. Otherwise, these elements arise as commuting products, each of which can be described by a suitably bracketed word. If any of these words has global phase factor $1$, they are already contextual words. Otherwise, we can substitute them into the words given in Example~\ref{KLex} to obtain contextual words.
\end{proof}

\subsubsection{Beyond $\ZZ_2$: padding, splitting and variable changes}

We can transfer contextual words from $\ZZ_2$ to $\ZZ_{2k}$, using the embedding $\ZZ_2 \rightarrowtail \ZZ_{2k}$ which sends $1$ to $k$, which can be applied to a commutator matrix over $\ZZ_2$ to produce one over $\ZZ_{2k}$.
If we take any of the contextual words $w$ from Example~\ref{KLex}, we can then perform a simple padding construction.
 We append $a^{2k-2}b^{2k-2}c^{2k-2}d^{2k-2}$ to $w$, and this produces a contextual word over $\ZZ_{2k}$. 

Can we construct contextual words over $\ZZ_{2k}$ using matrix values other than $0$ and $k$? By Theorem~\ref{cwordth}, the global phase factor for a contextual word over $\ZZ_{2k}$ must be $k$, but we may use other values from $\ZZ_{2k}$ in constructing the word.
We can use a splitting construction to achieve this. We illustrate the idea with a simple example over $\ZZ_4$. Given the contextual word $((ab)(dc))((ca)(bd))$ from Example~\ref{KLex}, we split the generator $a$ into $a_1$ and $a_2$. We can use the commutator matrix
$$\mu= \begin{bmatrix}
0 & 0 & 1 & 1 & 1 \\
0 & 0 & 3 & 3 & 1 \\
3 &1 & 0& 2 & 0 \\
3 & 1 & 2 & 0 & 0 \\
3 & 3 & 0 & 0 & 0
\end{bmatrix}
$$
and obtain the contextual word  $[((a_1a_2)b)(cd)][((a_1a_2)c)(bd)] [a_1^{2}a_2^2b^2c^2d^2]$, using also the padding construction described previously.

\subsubsection{Classification of contextuality for matrices in Darboux normal form}

Our overall aim is to give a complete classification of which commutation groups $\CG(\mu)$ admit contextual words.
We shall achieve this for matrices $\mu$ in \emph{Darboux normal form}, \ie  whose only non-zero entries occur in block $2\times 2$ matrices on the main diagonal of the form
$$
\begin{array}{cc}
    0 & \lambda \\
    -\lambda & 0
\end{array}
$$

Given a matrix $\mu$  over $\ZZ_{2n}$, which is in Darboux normal form, it is possible to decide whether it supports contextual words by considering the parity, relative to $n$, of the non-zero entries of $\mu$. By relative parity, we mean whether the power with which $2$ appears in the prime factor decomposition of $n$ is lower than the power with which it appears in the prime factor decomposition of each of the non-zero entries. Thus, if $n=m'\times 2^m$ and $\lambda=l'\times 2^l$, where $m'$ and $l'$ are both odd integers, we say that $\lambda$ is even relative to $n$ if $l>m$ and, conversely, that $\lambda$ is odd relative to $n$ if $l\leq m$.

\begin{theorem}
If $\mu$ is in Darboux normal form with entries in $\ZZ_{2n}$, then there is a contextual word over $\CG(\mu)$ if and only if  there are two non-zero entries above the main diagonal, $\lambda_1=l'_1\times 2^{l_1}$ and $\lambda_2=l'_2\times 2^{l_2}$ which are both odd relative to $n=m'\times 2^m$.
\end{theorem}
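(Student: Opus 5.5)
The plan is to work throughout in the linear-algebraic model $H(\mu)$ of Proposition~\ref{GHprop}. Write the generators in Darboux blocks as pairs $(x_i,y_i)$ with $\mu(x_i,y_i)=\lambda_i$, all other pairs commuting. Then the commutator of Proposition~\ref{commutatorprop} decomposes as $\mu(\kv,\lv)=\sum_i \lambda_i\,\omega_i(\kv,\lv)$, where $\omega_i$ is the standard symplectic form on the $i$-th block. By Theorem~\ref{cwordth} and its proof, the global phase of any contextual word over $\ZZ_{2n}$ satisfies $2k=0$ with $k\neq 0$, so $k=n$. Both directions therefore amount to deciding whether a commuting-bracketed word with all multiplicities divisible by $2n$ can have phase exactly $n$.

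\textbf{Sufficiency.} Suppose $\lambda_1,\lambda_2$ are odd relative to $n$. I first show that $\lambda_i c\equiv n \pmod{2n}$ is solvable in $c$. This holds iff $\gcd(\lambda_i,2n)\mid n$. The $2$-part of $\gcd(\lambda_i,2n)$ is $2^{\min(l_i,m+1)}=2^{l_i}\le 2^m$ because $l_i\le m$, and its odd part divides $m'$. Choose such $c_1,c_2$ and set $X_i:=x_i^{c_i}$ and $Y_i:=y_i$. Then $X_iY_i \eqr J_n Y_iX_i$, while elements from different blocks commute. So $X_1,Y_1,X_2,Y_2$ reproduce, via the embedding $\ZZ_2\rightarrowtail\ZZ_{2n}$ with $1\mapsto n$, exactly the relations of $\CG(\mu_2)$ used in the Peres--Mermin example. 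Substituting into the contextual word $((X_1Y_2)(Y_1X_2))((X_1X_2)(Y_1Y_2))$ yields a bracketing whose brackets commute, because commutators of the composite words are computed bilinearly from those of the $X_i,Y_i$. Its phase is $n$. Finally, I apply the padding construction, appending suitable powers of each generator so that all multiplicities become multiples of $2n$. Each appended block $g^{r}$ commutes with everything relevant once the total is $\equiv 0$ in the vector part, and it adds no phase.

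\textbf{Necessity.} Suppose at most one block, say block $1$, has $\lambda_1$ odd relative to $n$; I aim to show no contextual word exists. Let $w$ have bracketing $\beta$ and all multiplicities divisible by $2n$. Since only the pairs $(x_i,y_i)$ carry nonzero entries, $\sum\Inv(w)=\sum_i \lambda_i N_i$, where $N_i$ counts $x_i/y_i$ inversions. For $i\ge 2$ we have $2^{m+1}\mid\lambda_i$. Reducing modulo $2^{m+1}$ therefore kills every block except the first: the phase becomes $\lambda_1N_1 \bmod 2^{m+1}$, and each bracket condition becomes $\lambda_1\omega_1(s,t)\equiv 0 \pmod{2^{m+1}}$. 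Since $n\not\equiv 0 \pmod{2^{m+1}}$, it suffices to show that the word $w_1$ obtained by deleting all letters outside block $1$ has $\lambda_1 N_1\equiv 0 \pmod{2^{m+1}}$. Here $w_1$ carries the projected bracketing, with deleted letters giving trivial brackets, and its bracket conditions hold modulo $2^{m+1}$. This reduces the theorem to a \emph{single two-generator block} over the ring $\ZZ_{2^{m+1}}$, with commutator $\lambda_1$ of $2$-adic valuation $l_1\le m$.

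\textbf{Main obstacle.} The step I expect to be hardest is the single-block case. The inversion-counting argument (Lemmas~\ref{commlemm} and~\ref{dlemm}) only gives $2\lambda_1N_1\equiv 0$, which is too weak by a factor of $2$. My first attempt will be structural. In a single block, two elements commute modulo $2^{m+1}$ iff $\lambda_1(ab'-a'b)\equiv 0$, i.e.\ their vectors are "parallel" modulo $2^{m+1-l_1}$. Because parallelism is essentially transitive on nonzero vectors, I would argue that the graph on $\CM$ minus its centre is a cluster graph, and then invoke Theorem~\ref{algclusterprop} to obtain a non-contextual assignment. This excludes contextual words by Theorem~\ref{cwsicth}. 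If transitivity fails because of zero divisors, the fallback is to define the assignment directly by induction along each ray: set $s(t\vv)=t\,s(\vv)+\binom{t}{2}\mutr(\vv,\vv)$ and check the additivity used in the proof of Theorem~\ref{cwsicth}.
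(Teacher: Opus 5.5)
Your sufficiency direction is essentially the paper's: the paper uses the same Peres--Mermin-shaped word with $c_1=k_a=m'2^{m-l_1}$, which is one particular solution of $\lambda_1c\equiv n$. Your reduction modulo $2^{m+1}$ to a single block is also valid, and handles the ``no odd entry'' and ``one odd entry'' cases at once. The gap is that the single-block case, which carries all the real content of necessity, is never proved, and your primary route for it fails.

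Put $e:=m+1-l_1$. Whenever $l_1<m$ (so $e\ge 2$), the graph on $\CM\setminus\Zen(\CM)$ is \emph{not} a cluster graph:
\begin{itemize}
\item $x$, $y^2$ and $x^{2^{e-1}}$ all lie in $\CM$ and are non-central.
\item $x^{2^{e-1}}$ commutes with $x$, and with $y^2$, since the commutator is $2^{e}\lambda_1\equiv 0 \pmod{2^{m+1}}$.
\item $x$ and $y^2$ do not commute, since $2\lambda_1\not\equiv 0 \pmod{2^{m+1}}$.
\end{itemize}
Already for one qudit over $\ZZ_4$ with $\lambda_1=1$, the element $x^2$ is adjacent to the non-adjacent pair $x$, $y^2$. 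So Theorem~\ref{algclusterprop} is inapplicable; beyond $\ZZ_2$ the cluster condition is only sufficient for non-contextuality, not necessary. That theorem would in any case also need the existence of an empirical model, which you do not address.

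The ``rays'' fallback does not fill the gap. It does not say how to treat vectors lying on several rays. Nor does it explain why additivity holds for commuting pairs on different rays, such as $x^{2^{e-1}}$ and $y^2$, which is exactly where the difficulty sits.

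The missing ingredient is an invariant of commuting-product words, proved by induction on the bracketing. Claim: if a bracketed word in $x,y$ has all brackets commuting modulo $2^{m+1}$ and letter counts $T_x,T_y$, then $2^e\mid T_xT_y$.
\begin{itemize}
\item At a node $(s,t)$ with counts $L_\cdot$, $R_\cdot$, commutation gives $2^e\mid L_xR_y-L_yR_x$.
\item The induction hypothesis gives $v_2(L_xR_y)+v_2(L_yR_x)=v_2(L_xL_y)+v_2(R_xR_y)\ge 2e$.
\item Hence one cross term is divisible by $2^e$, and by the congruence so is the other.
\item Expanding $(L_x+R_x)(L_y+R_y)$ then closes the induction.
\end{itemize}
Now $\sum\Inv(w)$ is the sum over nodes of $\pm\lambda_1$ times one such cross term. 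So every node contributes a multiple of $2^{l_1+e}=2^{m+1}$, and the phase can never be $n$. In your terms, this shows $s\equiv 0$ is a consistent scalar assignment on the reduced $\CM$. This is exactly the argument the paper's proof uses.
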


\begin{proof}
If we can find two non-zero entries above the main diagonal, $\lambda_1$ and $\lambda_2$ which are both odd relative to $n$ and we denote their corresponding variables by $a, b, c, d$, so that $\mu(a,b)=\lambda_1$ and $\mu(c,d)=\lambda_2$ then we can form the contextual word

$$((\underbrace{a\ldots a}_{k_a} \ \underbrace{c\ldots c}_{k_c})(bd))((\underbrace{a\ldots a}_{k_a} d)(b\underbrace{c\ldots c}_{k_c}))(\underbrace{a\ldots a}_{2n-2k_a})(\underbrace{b\ldots b}_{2n-2})(\underbrace{c\ldots c}_{2n-2k_c})(\underbrace{d\ldots d}_{2n-2})$$
where $k_a=m'\times 2^{m-l_1}$ and $k_c=m'\times 2^{m-l_2}$. It is straightforward to check that all the brackets commute and that the non-zero contributions to the overall commutation factor come from sliding $k_a$ occurences of the variable $a$ leftwise over the first occurence of the variable $b$, giving a total contribution equal to $$k_a\times\lambda_1 = (m'\times 2^{m-l_1})\times (l'_1\times 2^{l_1})=n\times l'_1$$ which is equivalent to $n$ modulo $2n$, since $l'_1$ is odd. Note that, since $2k_i\times\lambda_i\equiv0\ (mod\ 2n)$, there are no other non-zero contributions to the overall commutation factor.

On the other hand, if all the non-zero entries of $\mu$ are even relative to $n$ then we will not be able to get any word $w$ with a non-zero commutation factor since, as we have shown in Section \ref{odd}, the commutation factor of $w$ must satisfy the equation $$2\sum {\Inv(w)}=0$$ 

In $\ZZ_{2n}$ the only non-zero solution to this equation is $\sum {\Inv(w)}=n=m'\times 2^m$ and since all the commutation variables in the matrix $\mu$ have a power of $2$ greater than $m$ in their prime factor decomposition, any linear combination of these values will also have a power of $2$ greater than $m$ in its prime factor decomposition, and so will not yield a commutation factor equivalent to $n$ modulo ${2n}$.

Finally, we can show that if only one non-zero entry is odd relative to $n$, while all the others are relatively even, then contextual words cannot exist. For simplicity of notation, we will show this for $4$ generators but the proof, which is essentially a parity argument, works equally well for any number of generators. Let $n=m'\times 2^m$, as before, and let $a$ and $b$ denote the two variables whose corresponding entry in the commutation matrix is $\lambda=l'\times 2^{m-l}$ for some odd $l'$ and $l\geq 0$. By assumption, all other entries are of the form $\lambda_i=l'_i\times 2^{m+1+l_i}$ for some odd $l'_i$ and $l_i\geq 0$. If, as we assumed at the beginning, we have only one such relatively even $\lambda_i$ corresponding to two extra variables $c$ and $d$, then any bracketed subexpression of the form
$$w=(a^{L_a}b^{L_b}c^{L_c}d^{L_d})(a^{R_a}b^{R_b}c^{R_c}d^{R_d})$$

commutes if and only if 

\begin{equation*}
    \begin{aligned}
\lambda(L_aR_b-R_aL_b) &+ \lambda_i(L_cR_d-R_cL_d) \equiv 0\ (mod\ 2n)\\
\Leftrightarrow \ 2^{m-l}\times l' (L_aR_b-R_aL_b) &+2^{m+1+l_i}\times l'_i(L_cR_d-R_cL_d) =2 N \times 2^{m}\times m'
    \end{aligned}
\end{equation*}

Since the right hand side is a multiple of $2^{m+1}$ and the terms on the left hand side coming from the variables with a relatively even commutation factor are also multiples of $2^{m+1}$, as $l'$ is odd, it follows that 
$$(L_aR_b-R_aL_b) \, \dvs \, 2^{l+1}$$
If we were able to show that $R_aL_b$ is a multiple of $2^{l+1}$ then the overall commutation factor would be
$$\lambda R_aL_b + \lambda_i R_cL_d=2^{m-l}\times l' \times R_aL_b + 2^{m+1+l_i}\times l'_i R_cL_d$$
and it would also be a multiple of $2^{m+1}$. Since the only possible non-zero value for the overall commutation factor is $n$ modulo $2n$, a sum of terms which are divisible by $2^{m+1}$ cannot yield a non-zero overall commutation factor.

To show that $R_aL_b$ is a multiple of $2^{l+1}$, we start by proving by induction on the length of the word $w$ that, if $w$ is formed out of commuting products, then $$T_aT_b \, \dvs \, 2^{l+1}$$ where $T_a$ and $T_b$ denote the multiplicities with which the variables $a$ and $b$ appear in $w$.

If $w$ has length $1$ then at least one of $T_a$ or $T_b$ is equal to zero and the statement holds trivially. Assume the statement holds for words of length at most $W$ and let $w$ be a word of length $W+1$ of the form  
$$w=(a^{L_a}b^{L_b}c^{L_c}d^{L_d})(a^{R_a}b^{R_b}c^{R_c}d^{R_d})$$
By the inductive hypothesis, since $w$ is formed out of two well-bracketed words of strictly smaller length, we must have that $L_aL_b \, \dvs \, 2^{l+1}$ and $R_aR_b\, \dvs \, 2^{l+1}$. Therefore $$L_aL_bR_aR_b\, \dvs \, 2^{2l+2}$$ which, by the pigeon-hole principle implies that either $L_aR_b$ or $R_aL_b$ must be a multiple of $2^{l+1}$.

We have argued previously that, since the words within the two brackets commute we must have $$(L_aR_b-R_aL_b) \, \dvs \, 2^{l+1}$$
But since at least one of the two terms is a multiple of $2^{l+1}$ it is in fact necessary that both of terms are divisible by $2^{l+1}$.

This both completes the inductive proof, as $T_a=L_a+R_a$ and $T_b=L_b+R_b$ and so $T_aT_b$ expands as a sum of four terms which are all divisible by $2^{l+1}$:
$$T_aT_B= L_aL_b+L_aR_b+R_aL_b+R_aR_b$$
and also allows us to conclude that it is not possible to obtain a non-zero contribution to the overall commutation factor from the $a$ and $b$ variables if these are the only two variables with a relatively odd commutation factor.

\end{proof}

\subsubsection{Reduction to Darboux normal form}

Every commutator matrix can be reduced to one in Darboux normal form. This is standard over a field, but less obvious over $\ZZ_d$ for arbitrary $d$, so we include a proof.

Note that since $\mu$ plays the role of a bilinear form, if we wish to perform a change of basis preserving this form, we must perform the corresponding row and column operations on the matrix $\mu$. These operations are encoded by an invertible base change matrix $U$; the resulting matrix $U^T\mu U$ is said to be \emph{cogredient} to $\mu$. 

\begin{lemma}\label{clearing}
    Every commutation matrix $\mu$ is cogredient to a skew-symmetric matrix $\mu_d$  whose only non-zero entries occur directly above and below the main diagonal. We call this the standard form of the commutation matrix.
\end{lemma}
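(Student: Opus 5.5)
We will prove the lemma by induction on the size $n$ of $\mu$, using cogredient transformations $\mu \mapsto U^{\top}\mu U$ with $U$ invertible over $\ZZ_d$. Such transformations preserve skew-symmetry and the zero diagonal. The zero diagonal holds because $(U^{\top}\mu U)_{ii} = \vec{u}_i^{\top}\mu\vec{u}_i = \sum_{j<k}(u_ju_k\mu_{jk} + u_ku_j\mu_{kj}) = 0$, using $\mu_{jj}=0$ and $\mu_{kj} = -\mu_{jk}$. So we may work with elementary congruences: swap rows $i,j$ together with columns $i,j$; add $c$ times row $j$ to row $i$ together with $c$ times column $j$ to column $i$. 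Applying the same operation to rows and columns keeps the matrix in the cogredient class.

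The goal of the inductive step is to clear the first row and column except for the entry $(1,2)$ (and $(2,1)$). The matrix then splits as the $1\times 1$ block $[0]$ bordered by row/column $1$, which only touches index $2$, together with the lower-right $(n-1)\times(n-1)$ block on indices $2,\dots,n$. Applying the induction hypothesis to that block gives a tridiagonal form there. The change of basis on indices $2,\dots,n$ must not disturb row $1$, whose only nonzero entry sits in column $2$. So I will strengthen the induction to clear row $1$ and row $2$ outside the $2\times 2$ block: first isolate index $1$ with index $2$, then clear the remaining entries of row $2$ in columns $\ge 3$. This yields a block-diagonal $[\![0,\lambda],[-\lambda,0]]\oplus\mu'$, and induction on $\mu'$ finishes. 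This is even stronger than the statement, since it is essentially Darboux form, and it is consistent with the standard form claimed.

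The key step, and the main obstacle, is clearing row $1$ over the ring $\ZZ_d$, which is not a field and so does not allow pivoting by division. I would use a Euclidean/gcd reduction on the entries $\mu_{12},\dots,\mu_{1n}$ viewed as integers. Repeatedly subtract multiples of column $j$ from column $k$ for $j,k\ge 2$ (with the matching row operation, which leaves row $1$ entries changed exactly as a column operation on that row), and swap indices. Running the Euclidean algorithm on the first row drives it to $(0,g,0,\dots,0)$ with $g=\gcd$ in $\ZZ_d$. These operations involve only indices $\ge 2$, so they are legitimate cogredient moves, and they terminate because the integer representatives strictly decrease.

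Clearing row $2$ beyond column $2$ is harder, because $\mu_{2k}$ need not be a multiple of $g=\mu_{12}$. I would first choose the pivot pair well: among all nonzero entries, pick one, say $\mu_{12}$, of minimal $2$-adic/ideal order, that is, generating the largest ideal in $\ZZ_d$ among all entries. This can be done by Euclidean steps across the whole matrix. Then every entry is a multiple of $g$. Subtracting $(\mu_{2k}/g)$ times index $1$ from index $k$ kills $\mu_{2k}$, and subtracting multiples of index $2$ kills $\mu_{1k}$. Achieving this global minimal-ideal pivot over $\ZZ_d$ with composite $d$, where the ideals of $\ZZ_d$ are not totally ordered, is where I expect the real work to lie. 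If it fails, I will fall back on just the tridiagonal claim: clear row $1$ only, then recurse on indices $2,\dots,n$ using operations that add multiples of indices $\ge 3$ to one another and to index $2$. Such operations leave row $1$ untouched except possibly entry $(1,2)$, which stays unchanged since $\mu_{1k}=0$ for $k\ge 3$. That suffices for the lemma as stated.
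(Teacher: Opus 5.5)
Your fallback argument is correct and is essentially the paper's proof. In both, you clear one row at a time by Euclid's algorithm, using cogredient swaps and additions among the not-yet-processed indices only, so rows already cleared are never disturbed; the paper starts at row $n$ and works upward, while you start at row $1$ and work downward. The strengthening to Darboux form is unnecessary here (the paper does it separately in the next theorem), and its key step is left unproven: choosing an entry that generates a maximal ideal of $\ZZ_d$ does not make it divide every entry (e.g.\ $2$ and $3$ in $\ZZ_6$), so you would first have to drive some entry down to the gcd of all entries.
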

\begin{proof}

We start by noting that the `swapping' matrix $U_{i,j}$ which is obtained by swapping the $i^{th}$ and $j^{th}$ columns of the identity matrix is self-inverse over $\ZZ_{2n}$ and $$U^T_{i,j}\mu U_{i,j}$$ is the commutation matrix obtained from $\mu$ by first swapping the $i^{th}$ and $j^{th}$ rows of $\mu$ and then the $i^{th}$ and $j^{th}$ columns. 

Similarly, the `adding' matrix $V^{\alpha}_{i, j}$ which is obtained by adding to the $i^{th}$ column of the identity matrix $\alpha$ times the $j^{th}$ column is invertible, its inverse being $V^{-\alpha}_{i, j}$. Hence the matrix $$V^{\alpha T}_{i,j}\mu V^{\alpha}_{i, j}$$ is cogredient with $\mu$ and has the corresponding effect of adding $\alpha$ times the $j^{th}$ row/column of $\mu$ to its $i^{th}$ row/column. 

Using these two types of cogredient operations it is possible, using Euclid's algorithm to change $\mu$ into a cogredient matrix $\mu_d$ with the desired property. 
\\

The first step is to consider the $n^{th}$ row of the matrix $\mu$:
$$
\begin{array}{ccccc}
\ddots &\vdots &\vdots &\vdots &\vdots \\
    \ldots&0 & * & *& -c \\
    \ldots&* & 0 & * & -b\\
   \ldots& * & * & 0 & -a  \\
  \ldots&  c & b & a & 0    
\end{array}
$$
If it has $k$ non-zero entries, we can use suitable swapping $U_{i,j}$ matrices to bring those entries to the right of all zero entries, ordered ascendingly. Then if $a=\mu(n,n-1)>\mu(n,n-2)=b$ we can use  $V^{\alpha}_{n-1,(n-2)}$ and $U_{n-1,n-2}$ matrices to perform Euclid's algorithm on the bottom entries of these penultimate two columns, resulting in a cogredient matrix $\mu_1$ with $\mu_1(n,n-2)=0$ and $\mu_1(n,n-1)=gcd(a,b)$. For example, if the first step of the algorithm gives the decomposition $a=bq_1+r_1$ then the matrix
$$\mu'=U_{n-1,n-2}^TV^{-q_1 T}_{n-1, n-2}\mu V^{-q_1}_{n-1, n-2}U_{n-1,n-2}$$ 
will have $\mu'(n,n-2)=r_1$ and $\mu'(n,n-1)=b$. If $r_1$ is non-zero, we can continue iterating the next steps of the algorithm, until eventually we reach $\mu_1$.

The next step is to consider $c=\mu_1(n,n-3)$ and use suitable $U_{n-3,n-1}$ and $V^{\alpha}_{n-1,n-3}$ matrices to again perform Euclid's algorithm, resulting in a matrix $\mu_2$ for which $\mu_2(n,n-2)=\mu_2(n,n-3)=0$ and $\mu_2(n,n-1)=gcd(a,b,c)$. And we proceed to eliminate all the $k$ next non-zero entries of the last row, thus leaving $$\mu_{k-1}(n,n-1)=gcd(\mu(n,n-1),\mu(n,n-2),\ldots,\mu(n,n-k))$$ as the only non-zero entry on the $n^{th}$ row. And since cogredient operations result in skew-symmetric matrices, the only non-zero entry on the $n^{th}$ column will also be the one above the main diagonal.

We can repeat these steps to clear out the $t$ non-zero entries on row $n-1$, which are to the left of the $(n-1,n-2)$ position. This results in some matrix $\mu_{t-1}$ whose only nonzero entries on row $n-1$ are $\mu_{t-1}(n-1,n)=\mu_{k-1}(n-1,n)$ and
\[ \mu_{t-1}(n-1,n-2)=gcd(\mu_{k-1}(n-1,n-2),\mu_{k-1}(n-1,n-3),\ldots,\mu_{k-1}(n-1,n-t)). \]
We proceed similarly with the remaining rows, eventually resulting in a matrix $\mu_d$ in standard form.
\end{proof}

\begin{theorem}
    Every commutation matrix $\mu$ is cogredient to a matrix $\mu_D$ in Darboux normal form, whose only non-zero entries occur in block $2\times 2$ matrices on the main diagonal of the type
$$
\begin{array}{cc}
    0 & \lambda_i \\
    -\lambda_i & 0
\end{array}
$$
\end{theorem}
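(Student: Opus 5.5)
We start from Lemma~\ref{clearing}, so we may assume $\mu$ is already in standard form: a skew-symmetric tridiagonal matrix with superdiagonal entries $a_1, \ldots, a_{n-1}$, where $a_i = \mu(i,i+1)$. The plan is to remove every second off-diagonal entry by further cogredient operations of the two kinds used in that lemma. We proceed from one end, say from the top left corner. The aim is to make $a_2 = 0$ while keeping the first $2\times 2$ block intact; we then detach that block and recurse on the remaining $(n-2)\times(n-2)$ principal submatrix.

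\textbf{Key step.} Look at the $3 \times 3$ principal block on indices $1,2,3$, whose superdiagonal entries are $a_1, a_2$. Index $2$ is coupled to both $1$ and $3$, while $\mu(1,3) = 0$ and index $1$ is coupled to nothing else.
\begin{itemize}
\item First we run Euclid's algorithm on the pair $(a_1, a_2)$ in row/column $2$. Each elementary move adds a multiple of column $1$ to column $3$ (and correspondingly for rows), or swaps indices $1$ and $3$. Adding $\alpha$ times basis vector $1$ to basis vector $3$ changes $\mu(2,3)$ by $\alpha\mu(2,1) = -\alpha a_1$.
\item This move leaves $\mu(1,3)$ at $\alpha\mu(1,1) = 0$, which is the crucial point.
\item It also changes $\mu(3,j)$ for $j>3$ by $\alpha\mu(1,j) = 0$, since row $1$ has only its entry in column $2$.
\item The swap of $1$ and $3$ only moves the row-$3$ entries $\mu(3,4)$ onto index $1$. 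So we should first clear $\mu(3,4)$ against $\mu(2,3)$, or better, adopt a cleaner formulation.
\end{itemize}

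\textbf{Cleaner formulation.} Run the Euclid reduction on row $2$ only through operations that add multiples of basis vectors among $\{1,3\}$ to each other. These operations preserve $\mu(1,3)$ as long as it is zero, since $\mu(e_1+\alpha e_3, e_3) = 0$. They do, however, mix the couplings of $1$ and $3$ to indices $j \geq 4$.

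The output is a basis $f_1, f_3$ of $\mathrm{span}(e_1,e_3)$ with $\mu(e_2, f_3) = 0$ and $\mu(e_2, f_1) = \gcd(a_1,a_2)$. The vector $f_1$ may now have couplings to index $4$ inherited from $e_3$. We then use $e_2$, which couples only to $f_1$ among the new vectors, to clear those couplings. For $j \geq 4$, replace $e_j$ by $e_j - \gamma e_2$ with $\gamma\,\mu(f_1,e_2) = \mu(f_1,e_j)$. This is solvable only if $\gcd(a_1,a_2)$ divides $\mu(f_1,e_j)$, which is not automatic over $\ZZ_d$.

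\textbf{Main obstacle.} This divisibility over a non-field is where I expect the difficulty. To handle it, I would first pick the pivot as the entry whose $\gcd$ with $d$ has minimal $2$-adic (more generally, minimal) content. Concretely, permute indices so that the entry $\mu(i,j)$ generating the smallest ideal in $\ZZ_d$ sits in position $(1,2)$; over the local pieces $\ZZ_{p^e}$ this means minimal valuation.

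Over $\ZZ_{p^e}$ this entry then divides every other entry. We clear all of row and column $1$ and $2$ using $e_j \mapsto e_j - \gamma e_1 - \delta e_2$, which is the standard symplectic Gram--Schmidt step. Induction on $n$ then gives the block-diagonal form.

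For general $d$, I would use the Chinese remainder theorem: $\ZZ_d \cong \prod_p \ZZ_{p^{e_p}}$. Cogredience over $\ZZ_d$ is equivalent to cogredience in each component. Since $GL_n(\ZZ_d) = \prod_p GL_n(\ZZ_{p^{e_p}})$, the base changes assemble into one over $\ZZ_d$. The block positions can be aligned across primes because we are free to reorder the blocks; any mismatched leftover is just a zero block $\lambda_i = 0$.

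If the statement intends the standard-form route directly, the same minimal-valuation pivot choice applied to the tridiagonal matrix is what makes the Euclid steps above terminate without reintroducing entries.
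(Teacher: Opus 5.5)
Your proof is correct, and it takes a different route from the paper. The argument that actually carries it is not the one you open with. Once you localize, the appeal to Lemma~\ref{clearing} and the ``Key step''/``Cleaner formulation'' paragraphs are no longer needed. Over $\ZZ_{p^e}$ you permute an entry of minimal $p$-adic valuation to position $(1,2)$, and it then divides every entry. One step $e_j\mapsto e_j-\gamma_j e_1-\delta_j e_2$ ($j\geq 3$) splits off a $2\times 2$ block whose complement is again alternating, and induction on $n$ finishes. The isomorphism $\mathrm{GL}_n(\ZZ_d)\cong\prod_p\mathrm{GL}_n(\ZZ_{p^{e_p}})$ then reassembles the base changes. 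There is no alignment issue provided the $k$-th pivot is always placed at $(2k-1,2k)$, so your remark about reordering blocks is unnecessary.

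Correct one slip: the pivot must generate the \emph{largest} ideal among the entries (minimal valuation), not the smallest. Note also that over $\ZZ_d$ itself no entry need divide the others (e.g.\ $2$ and $3$ in $\ZZ_6$), which is exactly why localizing is essential. Your closing remark about making the tridiagonal Euclid steps terminate is unsupported, but you do not need it.

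The paper instead keeps the tridiagonal standard form and works directly over $\ZZ_d$ with only the swap and add matrices, treating $4\times 4$ diagonal blocks with superdiagonal $a,b,c$. When $a\nmid b$, Euclid produces exactly the stray coupling $y$ at position $(1,4)$ that you ran into. The paper removes it not by localizing but by re-clearing the first row and column to get $\gcd(a,b,y)$ at $(1,2)$, and iterating. Termination comes from the strict growth of the ideal generated by the $(1,2)$ entry.

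The paper's route needs no factorization of $d$ and stays inside the Euclidean machinery of Lemma~\ref{clearing}. However, its case analysis, its termination argument, and the effect on entries outside the $4\times 4$ block are handled only briefly. Your route has no termination issue at all and is the standard argument for alternating forms over local principal ideal rings. It also yields more: in each local component every later entry is a multiple of the earlier pivot. So after recombining you can arrange $\lambda_1\mid\lambda_2\mid\cdots$, an invariant-factor form the paper does not obtain.
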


\begin{proof}

From Lemma \ref{clearing}, we know that $\mu$ is cogredient to a matrix $\mu_d$ in standard form. We describe an algorithm which, given a $4\times 4$ diagonal block of $\mu_d$ of the type 
$$
\begin{array}{cccc}
    0 & a & 0& 0 \\
    -a & 0 & b & 0\\
    0 & -b & 0 & c \\
    0 & 0 & -c & 0    
\end{array}
$$
performs cogredient operations on $\mu_d$ to produce a block in Darboux normal form of the type 
$$
\begin{array}{cccc}
    0 & \lambda_1 & 0& 0 \\
    -\lambda_1 & 0 & 0 & 0\\
    0 & 0 & 0 & \lambda_2 \\
    0 & 0 & -\lambda_2 & 0    
\end{array}
$$
Assume without loss of generality that the $4\times 4$ block is in the top left-hand corner of $\mu_d$. If $b=0$ the block already has the desired format. If $c$ is equal to zero, we can use `swapping' $U_{1,3}$ and suitable `adding' $V^{\alpha}_{1,3}$ matrices to perform Euclid's algorithm on the non-zero entries in the first and third columns, resulting in a block of the desired format, with $\lambda_1=gcd(a,b)$ and $\lambda_2=0$, and the same type of procedure can be used when $a=0$. 

In the remaining case, when all entries are non-zero, we have to distinguish two scenarios: first, if $aq=b$ then $V^{qT}_{3,1}\mu_d V^{q}_{3,1}$ has the top left-hand block in Darboux normal form with $\lambda_1=a$ and $\lambda_2=c$. 

Otherwise, performing Euclid's algorithm as above will initially result in a block with non-zero entries away from the main diagonal:

$$
\begin{array}{cccc}
    0 & gcd(a,b) & 0& y \\
    -gcd(a,b) & 0 & 0 & 0\\
    0 & 0 & 0 & x \\
    -y & 0 & -x & 0    
\end{array}
$$
We now make a slight modification to the procedure in Lemma \ref{clearing} in order to bring this block matrix back to standard form: instead of applying Euclid's algorithm to reduce the entries on the last row and column, we use it to reduce the entries on the first row and column. The resulting matrix will be of the form 
$$
\begin{array}{cccc}
    0 & gcd(a,b,y) & 0& 0 \\
    -gcd(a,b,y) & 0 & y' & 0\\
    0 & -y' & 0 & x' \\
    0 & 0 & -x' & 0    
\end{array}
$$
At this point we can repeat the steps outlined so far until we eventually bring the block to Darboux normal form.  Since we started with the assumption that $a$ is not a factor of $b$, the greatest common divisor of $a,b$ and $y$ must be strictly less than $a$ in the divisibility order, so eventually the process will terminate. 

\end{proof}

\textbf{Discussion} $\,$ There is an important caveat to this result. We have defined contextuality for $\CG(\mu)$ in terms of $\CM(\mu)$, which is defined relative to the set of generators $\XG$. When transforming $\mu$ to $\mu'$ in Darboux normal form, we will have $\CG(\mu) \cong H(\mu) \cong H(\mu') \cong \CG(\mu')$, but this transformation will not preserve the generators. In particular, the new generators corresponding to the transformed basis for $\mu'$ may correspond to words in the old generators which cannot be formed from commuting products. This means that a contextual word over $\CG(\mu')$ may not correspond to one over $\CG(\mu)$.

\section {Unitary representation}

Since $\CG(\mu)$ is a finite group, it has unitary representations. Indeed, every linear representation is equivalent to a unitary one.
We wish to have unitary representations which faithfully preserve the internal $\ZZ_d$-action.

We use the qudit Hilbert space $\HH_d := \Complex^d$, with basis vectors $\ket{k}$ labelled by elements of $\ZZ_d$. The tensor product of $n$ copies of this space, $\HH_{n,d}$, has basis vectors $\ket{\kv}$ labelled by $\kv \in \ZZ_d^n$. We write $\UG(\HH_{n,d})$ for the unitary group on $\HH_{n,d}$. 
The centre of this group is isomorphic to the circle group, $\UG(1) := \{ z \in \Complex \mid |z| = 1 \}$. For each $d \geq 2$, this contains the cyclic subgroup of the $d$'th complex roots of unity.
We write $\omega := e^{\frac{2\pi i }{d}}$ for the primitive $d$'th root of unity. The map $k \mapsto \omega^k$ is an isomorphism from $\ZZ_d$ to the multiplicative group of $d$'th complex roots of unity.

Given a commutator matrix $\mu \in \cm(n, \ZZ_d)$, we shall define a representation $\rho : H(\mu) \to \UG(\HH_{n,d})$:
$\rho (k, \kv) \, \ket{\lv} \; = \; \omega^{k + \mutr(\kv,\lv)} \, \ket{\lv + \kv}$.
\begin{proposition}
For each $(k, \kv) \in H(\mu)$, $\rho (k, \kv)$ is a well-defined unitary operation. Moreover, $\rho$ is an injective group homomorphism which preserves scalars, \ie $\rho (k,0) = \omega^k \mathbb{1}$.
\end{proposition}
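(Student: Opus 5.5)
The plan is to verify the claims directly from the defining formula $\rho(k,\kv)\ket{\lv} = \omega^{k+\mutr(\kv,\lv)}\ket{\lv+\kv}$ on the computational basis $\{\ket{\lv}\}_{\lv \in \ZZ_d^n}$ of $\HH_{n,d}$, extended linearly. No earlier result beyond the definition of $H(\mu)$ and the bilinearity of $\mutr$ should be needed.

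First, well-definedness. The exponent $k + \mutr(\kv,\lv)$ is an element of $\ZZ_d$, and since $\omega^d = 1$ the value $\omega^{k+\mutr(\kv,\lv)}$ does not depend on the integer representative chosen. Likewise $\lv+\kv$ is computed in $\ZZ_d^n$, so the target basis vector is well defined. For unitarity, I will observe that $\lv \mapsto \lv+\kv$ is a bijection of $\ZZ_d^n$ (with inverse $\lv \mapsto \lv - \kv$). Hence $\rho(k,\kv)$ sends the orthonormal basis to an orthonormal basis, namely a permuted basis multiplied by unit-modulus phases. It is therefore a unitary: a permutation matrix times a diagonal unitary.

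Second, the homomorphism property, which is the only step with real content. I will compute both sides on a basis vector $\ket{\vec m}$. Applying $\rho(l,\lv)$ and then $\rho(k,\kv)$ gives
\[ \omega^{k+l+\mutr(\lv,\vec m)+\mutr(\kv,\lv+\vec m)}\ket{\vec m+\lv+\kv}. \]
Applying the product $(k,\kv)\cdot(l,\lv) = (k+l+\mutr(\kv,\lv),\kv+\lv)$ gives
\[ \omega^{k+l+\mutr(\kv,\lv)+\mutr(\kv+\lv,\vec m)}\ket{\vec m+\kv+\lv}. \]
Expanding both exponents by bilinearity of $\mutr$ yields the same sum $k+l+\mutr(\kv,\lv)+\mutr(\kv,\vec m)+\mutr(\lv,\vec m)$, so the two operators agree. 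This is where the structure of $H(\mu)$ is used, and it is exactly the same bilinearity that gives associativity in $H(\mu)$. Also $\rho(0,\vec 0)=\mathbb{1}$.

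Finally, scalars and injectivity. Since $\mutr(\vec 0,\lv)=0$, we get $\rho(k,\vec 0)\ket{\lv}=\omega^k\ket{\lv}$, so $\rho(k,0)=\omega^k\mathbb{1}$. For injectivity it suffices to show the kernel is trivial. Suppose $\rho(k,\kv)=\mathbb{1}$ and apply it to $\ket{\vec 0}$: this gives $\omega^k\ket{\kv}=\ket{\vec 0}$. Comparing basis vectors forces $\kv=\vec 0$, and then $\omega^k=1$ forces $k=0$, because $k\mapsto\omega^k$ is injective on $\ZZ_d$. I expect no real obstacle; the only step needing care is the bookkeeping in the homomorphism check, in particular making sure the argument order of $\mutr$ matches the product convention of $H(\mu)$.
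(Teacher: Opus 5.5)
Your proposal is correct and follows the paper's route: verify the homomorphism property on basis vectors, where it reduces to the identity $\mutr(\kv,\lv)+\mutr(\kv+\lv,\vec m)=\mutr(\lv,\vec m)+\mutr(\kv,\lv+\vec m)$, which holds by bilinearity. You also check unitarity, scalar preservation and injectivity (the latter via a trivial kernel, evaluating at $\ket{\vec 0}$); the paper asserts these but does not spell them out.
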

\begin{proof}
The verification that $\rho$ is a homomorphism amounts to showing that 
\[ \rho(k, \kv) \circ \rho(k', \kv') \, \ket{\lv} \; = \; \rho(k + k' + \mutr(\kv, \kv'), \kv + \kv') \, \ket{\lv} \]
which reduces to 
\[ \mutr(\kv, \kv') + \mutr(\kv + \kv', \lv) \; = \; \mutr(\kv', \lv) + \mutr(\kv, \kv' + \lv) \]
which follows from bilinearity.
\end{proof}

\subsection{Representation in Pauli groups}
The generalized Pauli groups $\Pnd$ are the subgroups of  $\UG(\HH_{n,d})$  generated by the $X$ and $Z$ operations. These operations are defined on $\HH_d$ by $X \, \ket{k} = \ket{k+1}$, and $Z \, \ket{k} = \omega^k \ket{k}$. 
These are the Sylvester ``shift''  and ``clock'' matrices \cite{sylvester1883quaternions}, and can be seen as discrete versions of  position and momentum operators.
Note that they satisfy the basic commutation relation $ZX = \omega XZ$.
They are then extended to $\HH_{n,d}$ as $X_i := \underbrace{I \otimes \cdots \otimes I}_{i-1} \otimes X \otimes \underbrace{I \otimes \cdots \otimes I}_{n-i}$, and similarly for $Z_i$, $i=1, \ldots ,n$. Note that the commutator matrix for the generators $X_i$, $Z_i$ is in Darboux normal form: the only non-zero entries are $\mu(Z_i,X_i) = 1$, $\mu(X_i,Z_i) = -1 \pmod d$.

\begin{proposition}
The image of $H(\mu)$ under $\rho$ is a subgroup of $\Pnd$.
\end{proposition}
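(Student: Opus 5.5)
Since $\rho$ is a homomorphism and $H(\mu)$ is generated by the scalars $(k,\vec 0)$ and the basis elements $(0,e_i)$, it suffices to show that $\rho(1,\vec 0)$ and each $\rho(0,e_i)$ lie in $\Pnd$.

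For the scalar, $\rho(1,\vec 0)=\omega\mathbb{1}$. This is in $\Pnd$ because the basic relation $ZX=\omega XZ$ on a single qudit gives $\omega\mathbb{1} = ZXZ^{-1}X^{-1}$. Tensoring with identities, $\omega\mathbb{1}=Z_1X_1Z_1^{-1}X_1^{-1}$ on $\HH_{n,d}$. Here $X^{-1}=X^{d-1}$ and $Z^{-1}=Z^{d-1}$, so inverses are available inside the generated group.

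For a generator, compute directly from the definition:
\[ \rho(0,e_i)\,\ket{\lv} \; = \; \omega^{\mutr(e_i,\lv)}\,\ket{\lv+e_i} \; = \; \omega^{\sum_j \mutr_{ij} l_j}\,\ket{\lv+e_i}. \]
This operator is the composite $X_i \circ \prod_j Z_j^{\mutr_{ij}}$, where the exponents $\mutr_{ij}\in\ZZ_d$ are read as integers in $\{0,\ldots,d-1\}$. The check runs as follows.
\begin{itemize}
\item The $Z_j$ commute with one another.
\item $\prod_j Z_j^{\mutr_{ij}}$ multiplies $\ket{\lv}$ by $\omega^{\sum_j \mutr_{ij}l_j}$, which is well defined modulo $d$.
\item $X_i$ then shifts the $i$-th coordinate by $1$.
\end{itemize}
Since $\mutr$ is strictly lower triangular, $\mutr_{ii}=0$. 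Even if this failed, applying the $Z$'s first evaluates the phase at $\lv$, which matches the definition. Hence $\rho(0,e_i)$ is a word in the generators $X_i,Z_j$ and lies in $\Pnd$.

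Any element $(k,\kv)$ is a product of these generators and scalars, for example via the normal form $J_k x_1^{k_1}\cdots x_n^{k_n}$ under the isomorphism of Proposition~\ref{GHprop}. Because $\rho$ is a homomorphism, $\rho(k,\kv)$ is a product of elements of $\Pnd$, and therefore lies in $\Pnd$.

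No step here is a real obstacle. The one point needing care is to get the order of composition right, so that the phase is evaluated at $\lv$ rather than at $\lv+e_i$; with the $Z$'s applied first, the formula matches exactly.
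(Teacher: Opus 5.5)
Your proof is correct and follows the paper's argument: both reduce to the generators of $H(\mu)$ and check $\rho(0,e_i) = X_i\prod_j Z_j^{\mutr_{i,j}}$ by the same direct computation on basis vectors. Your explicit verification that $\rho(1,\vec 0)=\omega\mathbb{1}=Z_1X_1Z_1^{-1}X_1^{-1}$ lies in $\Pnd$ fills in a step the paper leaves implicit.
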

\begin{proof}
Given an element $e_i$ of the standard basis of $\ZZ_d^n$, we have $\rho(e_i) = X_i \prod_{j=1}^n Z_j^{\mutr_{i,j}}$, which can be verified by a simple computation:
\[ \begin{array}{lcl}
X_i \prod_{j=1}^n Z_j^{\mutr_{i,j}} \, \ket{k_1} \otimes \cdots \otimes \ket{k_n} & = & X_i (\omega^{\mutr_{i,1}k_1} \ket{k_1} \otimes \cdots \otimes \omega^{\mutr_{i,n}k_n} \ket{k_n}) \\
& = & X_i (\prod_j \omega^{\mu_{i,j}k_j}) \, \ket{\kv} \\
& = & \omega^{\sum_j \mutr_{i,j} k_j} X_i \, \ket{\kv} \\
& = & \omega^{\mutr(e_i,\kv)} \, \ket{\kv + e_i} \\
& = & \rho(e_i) \, \ket{\kv} .
\end{array}
\]
Since $H(\mu)$ is generated by the $e_i$ and the scalar $(1,0)$, this yields the result.
\end{proof}
This result shows the universality of the Pauli operations for expressing discrete commutation relations. At the same time, the structural tools made available by the presentations of commutation groups allow for a fine-grained analysis of the ``algebra of contextuality''.





\section{Outlook}
Non-commutativity is a fundamental mathematical feature of quantum mechanics, distinguishing it from classical physics.
But in many key cases, we do not simply have the failure of commutativity, but rather that commutativity holds \emph{up to} a specified scalar.
This is the phenomenon of \emph{commutation relations}, which
play a central role in quantum physics. There are many familiar examples.

In this paper, we have given an answer, in the discrete case working over $\ZZ_d$, to the question: what \emph{is} a commutation relation in general?
This opens up the possibility of classifying the possible contextual behaviours arising from commutation relations. By virtue of the existence of unitary representations, these arise within quantum mechanics.

We mention a few topics of current and future work:
\begin{itemize}
\item Studying the cohomology of commutation groups, and relating this to the cohomological criteria for contextuality studied e.g.~in \cite{DBLP:journals/corr/abs-1111-3620,DBLP:conf/csl/AbramskyBKLM15,okay2017topological,aasnaess2022comparing}.
\item Studying commutation groups in relation to state-dependent contextuality and empirical models \cite{abramsky2011sheaf}.
\item Relating commutation groups to the logical analysis of contextuality in terms of partial Boolean algebras \cite{kochen1975problem,DBLP:conf/csl/AbramskyB21}.
\item Generalizing commutation groups to more general abelian groups of scalars.
\item A Stone-von Neumann type theorem for commutation groups.
\item Commutation groups are highly tractable, and can be represented as subgroups of generalized Pauli groups. By contrast, solution groups \cite{cleve2017perfect} are highly expressive, and computationally intractable. Are there some natural classes of structures which are more expressive than commutation groups, while remaining computationally tractable?
\end{itemize}

\bibliography{bibfile}

\appendix

\section{The Pauli group on qubits}
We recall the definition of the \textbf{Pauli operators}, dichotomic (\ie two-valued) observables corresponding to measuring spin in the $x$, $y$, and $z$ axes, with eigenvalues $\pm 1$
\[
X\coloneqq \left(
\begin{matrix}
0 & 1 \\
1 & 0
\end{matrix}
\right)
~~
Y\coloneqq \left(
\begin{matrix}
0 & -i \\
i & 0
\end{matrix}
\right)
~~
Z\coloneqq \left(
\begin{matrix}
1 & 0 \\
0 & -1
\end{matrix}
\right)
\]
These matrices are self-adjoint, have eigenvalues $\pm 1$, and together with the identity matrix $I$ satisfy the following relations:
\begin{gather}
X^2=Y^2=Z^2=I \nonumber \\ 
XY=iZ,~~ YZ=iX,~~ ZX=iY, \label{equ: Pauli's}\\
YX=-iZ,~~ZY=-iX,~~XZ=-iY. \nonumber
\end{gather}

\end{document}